\newtheorem{lemma}{Lemma}
\newtheorem{observation}{Observation}
\newtheorem{theorem}{Theorem}
\newcommand{\cparagraph}[1]{\par\vspace{2mm}\noindent\textbf{#1}}
\title{ParkView: Visualizing Monotone Interleavings}
\author{Thijs Beurskens\thanks{e-mail: [t.p.j.beurskens\textbar s.w.v.d.broek\textbar a.simons1\textbar w.m.sonke\\\hspace*{14mm}\textbar k.a.b.verbeek\textbar b.speckmann]@tue.nl}\\ %
    \scriptsize TU Eindhoven %
\and Steven van den Broek\footnotemark[1]\\ %
    \scriptsize TU Eindhoven %
\and Arjen Simons\footnotemark[1]\\ %
    \scriptsize TU Eindhoven %
\and Willem Sonke\footnotemark[1]\\ %
    \scriptsize TU Eindhoven %
\and Kevin Verbeek\footnotemark[1]\\ %
    \scriptsize TU Eindhoven %
\and Tim Ophelders\thanks{e-mail: t.a.e.ophelders@uu.nl}\\ %
    \parbox{1.4in}{\scriptsize \centering TU Eindhoven \\ Utrecht University} 
\and Michael Hoffmann\thanks{e-mail: hoffmann@inf.ethz.ch}\\ %
    \scriptsize ETH Zürich %
\and Bettina Speckmann\footnotemark[1]\\ %
    \scriptsize TU Eindhoven %
}
\abstract{
Merge trees are a powerful tool from topological data analysis that is frequently used to analyze scalar fields. 
The similarity between two merge trees can be captured by an interleaving: a pair of maps between the trees that jointly preserve ancestor relations in the trees. 
Interleavings can have a complex structure; visualizing them requires a sense of (drawing) order which is not inherent in this purely topological concept. 
However, in practice it is often desirable to introduce additional geometric constraints, which leads to variants such as labeled or monotone interleavings. Monotone interleavings respect a given order on the leaves of the merge trees and hence have the potential to be visualized in a clear and comprehensive manner.

In this paper, we introduce ParkView: a schematic, scalable encoding for monotone interleavings.
ParkView captures both maps of the interleaving using an optimal decomposition of both trees into paths and corresponding branches. We prove several structural properties of monotone interleavings, which support a sparse visual encoding using \emph{active paths} and \emph{hedges} that can be linked using a maximum of 6 colors for merge trees of arbitrary size. 
We show how to compute an optimal \emph{path-branch decomposition} in linear time and illustrate ParkView on a number of real-world datasets.
} 
\def\hyper@natlinkstart#1{%
  \Hy@backout{#1}%
  \hyper@linkstart{cite}{cite.\@bibunitname.#1}%
  \def\hyper@nat@current{#1}%
}
\def\hyper@natlinkbreak#1#2{%
  \hyper@linkend#1\hyper@linkstart{cite}{cite.\@bibunitname.#2}%
}
\def\hyper@natanchorstart#1{%
  \hyper@anchorstart{cite.\@bibunitname.#1}%
}
\def\bibcite#1#2{%
  \@newl@bel{b}{#1}{\hyper@@link[cite]{}{cite.\@bibunitname.#1}{#2}}%
}%
\def\@lbibitem[#1]#2{%
  \@skiphyperreftrue
  \H@item[\hyper@anchorstart{cite.\@bibunitname.#2}%
  \@BIBLABEL{#1}\hyper@anchorend\hfill]%
  \@skiphyperreffalse
  \if@filesw
    \begingroup
      \let\protect\noexpand
      \immediate\write\@auxout{%
        \string\bibcite{#2}{#1}%
      }%
    \endgroup
  \fi
  \ignorespaces
}%
\def\@bibitem#1{%
  \@skiphyperreftrue\H@item\@skiphyperreffalse
  \hyper@anchorstart{cite.\@bibunitname.#1}\relax\hyper@anchorend
  \if@filesw
    \begingroup
      \let\protect\noexpand
      \immediate\write\@auxout{%
        \string\bibcite{#1}{\the\value{\@listctr}}%
      }%
    \endgroup
  \fi
  \ignorespaces
}%
\def\@citex[#1]#2{%
  \let\@citea\@empty
  \@cite{%
    \@for\@citeb:=#2\do{%
      \@citea
      \def\@citea{,\penalty\@m\ }%
      \edef\@citeb{\expandafter\@firstofone\@citeb}%
      \if@filesw
        \immediate\write\@auxout{\string\citation{\@citeb}}%
      \fi
      \@ifundefined{b@\@citeb}{%
        \mbox{\reset@font\bfseries ?}%
        \G@refundefinedtrue
        \@latex@warning{%
          Citation `\@citeb' on page \thepage \space undefined%
        }%
      }{%
        \hyper@natlinkstart{\@citeb}%
            \hbox{\csname b@\@citeb\endcsname}%
        \hyper@natlinkend%
      }%
    }%
  }{#1}%
}%
\begin{document}
\begin{bibunit}
\maketitle

\section{Introduction} 
At the heart of topological data analysis lie so-called topological descriptors: summaries that identify important features of the data.
Topological descriptors are used in various application domains, such as medical imaging \cite{singh2023topological}, manufacturing~\cite{uray2024topological}, and environmental science~\cite{ver2023primer}.
They are generally classified into three types~\cite{yan2021scalar}: set-based descriptors such as persistence diagrams \cite{edelsbrunner2002topological}, complex-based descriptors including Morse-Smale complexes \cite{edelsbrunner2003hierarchical}, and graph-based descriptors such as Reeb graphs \cite{biasotti2008reeb, reeb1946points}.

\addtocounter{figure}{2}

\begin{figure*}[b]
    \begin{subfigure}{0.35\textwidth}
        \centering
        \includegraphics[width=\textwidth]{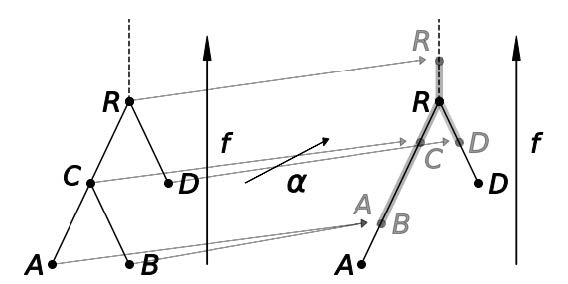}
        \vspace{0mm}
        \caption{Arrows from points to their image~\cite{pegoraro2021graph}}
        \label{fig:vis:arrows}
    \end{subfigure}
    \hfill
    \begin{subfigure}{0.25\textwidth}
        \centering
        \includegraphics[width=\textwidth]{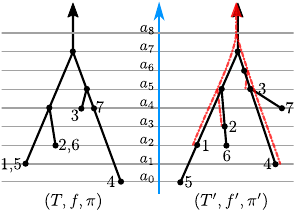}
        \caption{Offset image of tree \cite{gasparovich2019intrinsic}}
        \label{fig:vis:project}
    \end{subfigure}
    \hfill
    \begin{subfigure}{0.3\textwidth}
        \centering
        \includegraphics[]{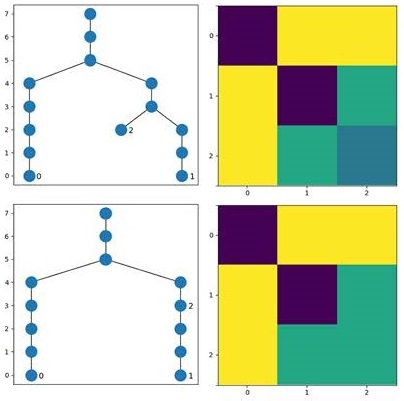}
        \caption{Labeling \cite{curry2022decorated}}
        \label{fig:vis:matrix}
    \end{subfigure}
    \caption{Visualizing interleavings.}
\end{figure*} 

We focus on merge trees, which are a graph-based topological descriptor.
A merge tree captures how the critical features of a scalar field---minima, maxima and saddle points---are connected (\autoref{fig:merge-tree-from-data}).
Merge trees are among the most important tools to support the visualization and analysis of scalar fields. For example, merge trees can be used to track features of a time-varying scalar field~\cite{saikia2017global} and to summarize~\cite{yan2020structural} or detect outliers~\cite{yan2022geometry} in an ensemble of time-varying scalar fields.
See \cite{DBLP:journals/tvcg/LukasczykWWWG24, weiran, DBLP:journals/tvcg/PontT24, qin, wetzels2024exploring, wetzels2024merge} for some very recent additional examples and the survey by Yan et al.~\cite{yan2021scalar} for an extensive overview of merge trees, and other topological descriptors, and their applications for scientific visualization.

\cparagraph{Merge trees and interleavings.} 
There are various similarity measures that are used to compare merge trees, such as the merge tree edit distance~\cite{sridharamurthy2020edit, sridharamurthy2023comparative, wetzels2022branch}, the Wasserstein distances for merge trees~\cite{pont2022wasserstein}, and the merge tree matching distance~\cite{bollen2023computing}.
We focus on the interleaving distance~\cite{gasparovich2019intrinsic, morozov2013interleaving, touli2022fpt} that captures how far two merge trees are from being isomorphic. The interleaving distance has desirable mathematical properties, such as stability and universality~\cite{bollen2021reeb}.
Intuitively, it ``weaves'' the two trees together via two \emph{shift maps} that take points from one tree to points a fixed distance higher in the other tree while preserving ancestry. The two maps together form an interleaving. Two identical trees can be woven together with two horizontal maps; the \emph{height} of an interleaving, which captures the distance between the two merge trees, corresponds to the distance that each shift map has to go up the tree towards the root for two trees that are different from each other.

\addtocounter{figure}{-3}

\begin{figure}[t]
    \centering
    \includegraphics{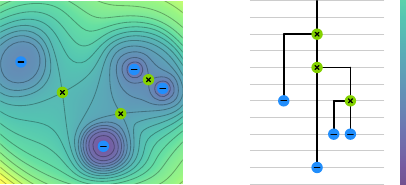}
    \caption{A 2D scalar field with its minima \raisebox{-1pt}{\includegraphics{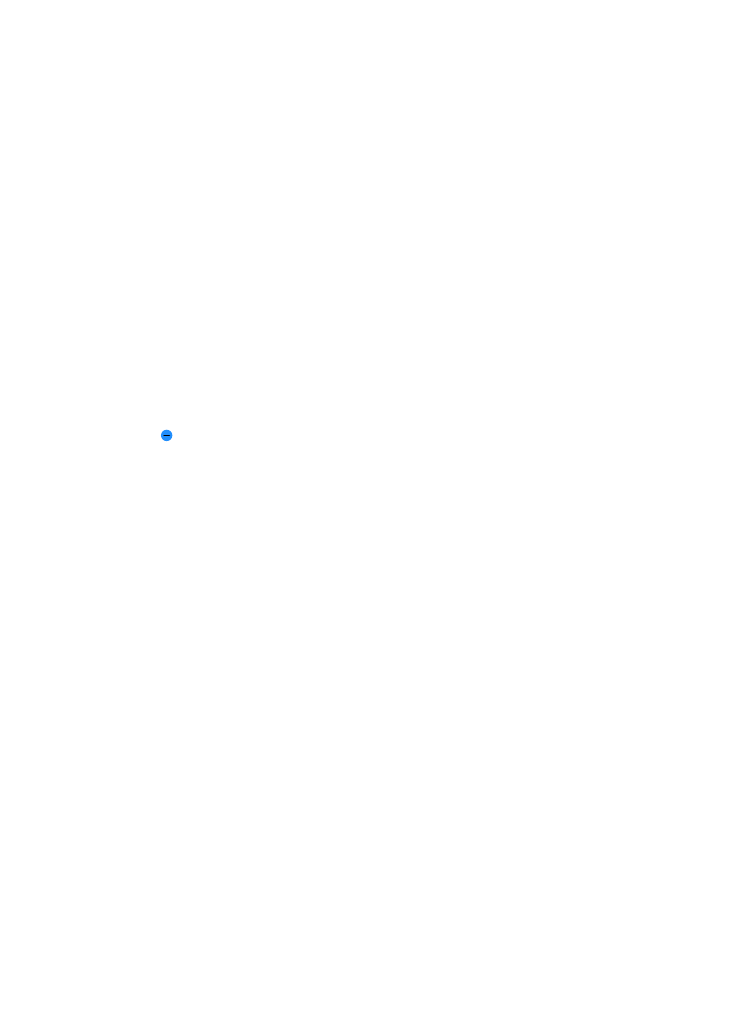}} and saddle points \raisebox{-1pt}{\includegraphics{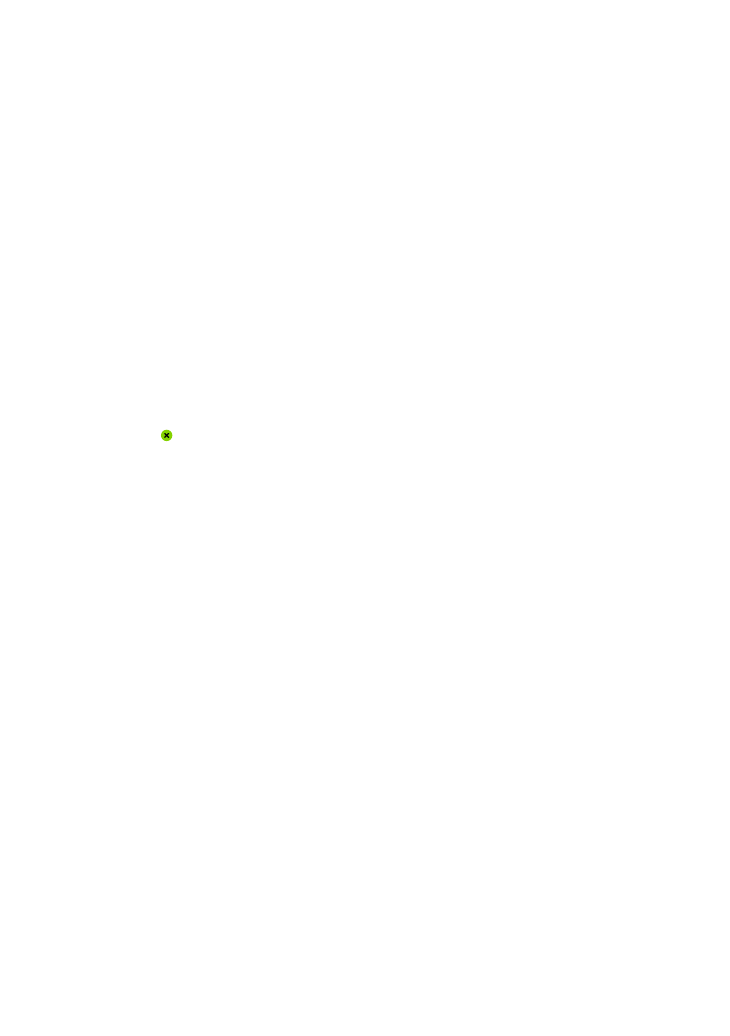}} marked \emph{(left)} and the corresponding merge tree \emph{(right)}.}
    \label{fig:merge-tree-from-data}
\end{figure}

Computing the interleaving distance is NP-hard~\cite{agarwal2018computing}.
Furthermore, the interleaving distance is a purely topological concept. In practice, it is often desirable to introduce additional geometric constraints such as labels or orders, since merge trees frequently arise from spatial terrains.  In addition, to visualize two merge trees together with an interleaving, one needs some sense of (drawing) order to create a meaningful visualization. 
The \emph{labeled interleaving distance}~\cite{gasparovich2019intrinsic} requires a matching of the two merge trees via labels. If such a labeling exists, then the interleaving can be computed efficiently~\cite{munch2019the}.  Yan et al.~\cite{yan2022geometry} recently proposed methods to construct geometry-aware labelings and use them to analyze time-varying data. The \emph{monotone interleaving distance}~\cite{beurskens2025relating} requires only a prior ordering on the leaves of the merge trees that respects the tree structure. Given such an ordering, for example based on the spatial structure of the data, the monotone interleaving distance can be computed efficiently~\cite{beurskens2025relating}.

Formally, a merge tree is a tree $T$ equipped with a function $f$ that assigns a height value to every point of~$T$.
We think of $T$ as a topological space; as such, we refer to not just the vertices, but also each point on the interior of an edge, as a \emph{point} of~$T$.
The highest vertex of $T$ is called the \emph{root}, from which an edge extends upwards to infinity.
The height function $f$ has to be continuous, and strictly increasing along each leaf-to-root path of~$T$.
An \emph{ordered merge tree} is a merge tree equipped with a total order $\sqsubseteq$ on its leaves that respects the tree's structure.
The interleaving distance compares two merge trees $T$ and $T'$ using \emph{$\delta$-interleavings} that consist of two \emph{$\delta$-shift maps}.
A $\delta$-shift map $\alpha$ takes points in $T$ and maps them continuously to points in $T'$ exactly $\delta$ higher.
A $\delta$-interleaving consists of two $\delta$-shift maps---a map $\alpha$ from $T$ to $T'$ and a map $\beta$ from $T'$ to $T$---such that for any point $x \in T$, the point $\beta(\alpha(x))$ is an ancestor of $x$ and for any point $y \in T'$, the point $\alpha(\beta(y))$ is an ancestor of $y$.
\autoref{fig:interleaving-example} shows an example of $\delta$-shift maps and a $\delta$-interleaving.
A $\delta$-shift map or $\delta$-interleaving between two ordered merge trees is \emph{monotone} if it respects the orders of the two trees.
The \emph{(monotone) interleaving distance} is then the smallest $\delta$ for which a (monotone) $\delta$-interleaving exists \cite{gasparovich2019intrinsic, morozov2013interleaving, touli2022fpt}.

\begin{figure}[t]
    \centering
    \hspace*{\fill}
        \centering
        \includegraphics{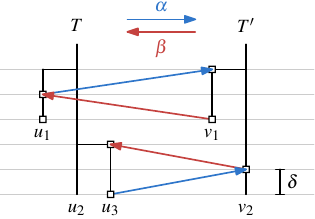}
    \hspace*{\fill}
    \caption{Two $\delta$-shift maps $\alpha$ and $\beta$ that define a monotone interleaving between ordered merge trees $T$ and $T'$. Note that $u_1 \sqsubseteq u_2 \sqsubseteq u_3$ and $v_1 \sqsubseteq' v_2$. The compositions $\alpha \circ \beta$ and $\beta \circ \alpha$ map $u_3$ and $v_2$ to their respective ancestor at height $2\delta$ higher, grid lines at distance~$\delta$.}
    \label{fig:interleaving-example}
\end{figure}

\addtocounter{figure}{1}

\begin{figure*}[t]
    \hspace*{\fill}
    \begin{subfigure}{\columnwidth}
        \centering
        \includegraphics[width=0.9\columnwidth]{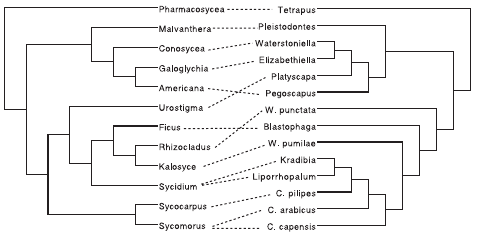}
        \caption{Tanglegram for phylogenetic trees~\cite{DBLP:journals/bioinformatics/ScornavaccaZH11}}
        \label{fig:vis:tanglegram}
    \end{subfigure}
    \hfill
    \begin{subfigure}{\columnwidth}
        \centering
        \includegraphics[width=0.453\textwidth]{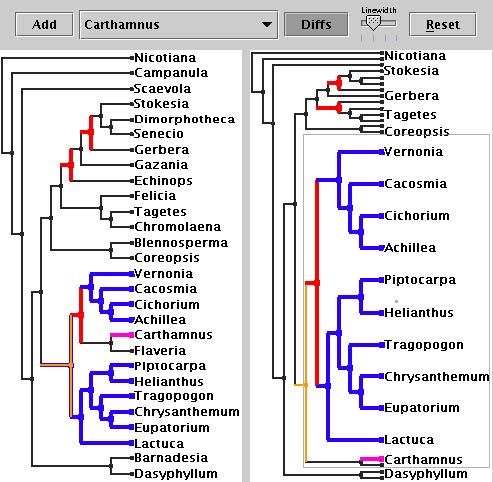}
        \hfill
        \includegraphics[width=0.52\textwidth]{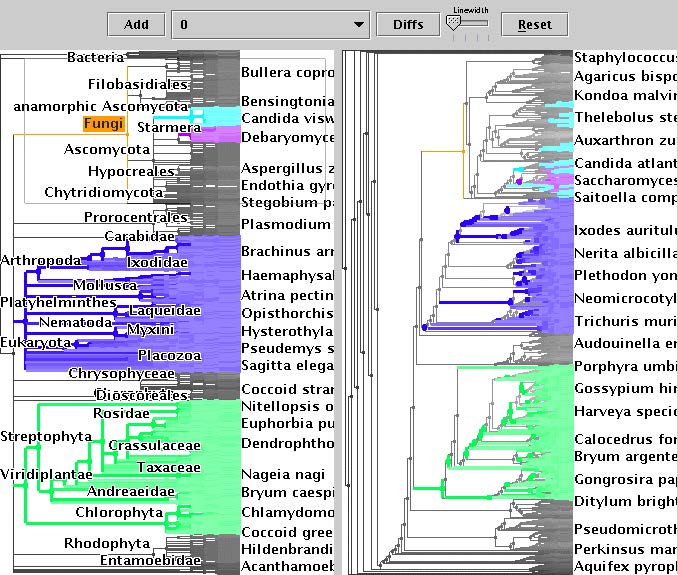}
        \caption{TreeJuxtaposer~\cite{DBLP:journals/tog/MunznerGTZZ03}}
        \label{fig:vis:TreeJuxtaposer}
    \end{subfigure}
    \hspace*{\fill}
    \caption{Visually comparing trees.}
   \label{fig:vis}    
\end{figure*}

\cparagraph{Visualizing interleavings.} Interleavings on merge trees can have a rather complex structure. However, this structure does reveal how much and where the two merge trees differ. As such, visualizations of interleavings could play an important role as part of a visual analytics system, in particular, when combined with brushing and linking to help the user localize the merge trees with respect to the input data. However, currently existing visualizations of interleavings or their constituent shift maps are mostly designed to visually explain the mathematical concept of interleavings on small examples, and not suitable for actual data exploration.

In the following we discuss a set of requirements for mathematically meaningful and effective visualizations of interleavings; we collected these requirements from experts in topological data ana\-lysis. 
The most important requirement is for the visualization to be \emph{complete}. The users should be able to reconstruct both shift maps from the visualization, that is, they should be able to 
\begin{enumerate}[start=0, label={\textbf{R\arabic*}}]
\item \label{req:map}
determine the image of any part of the tree.
\end{enumerate}
A complete visualization by itself is not necessarily effective. The following requirements are intended to highlight the structure of the interleaving and aid the user in recognizing patterns.
Firstly, to evaluate how similar the structures of the two input trees are under the given interleaving, the user should be able to determine the shift $\delta$ of a $\delta$-interleaving easily from the visualization.
In particular, as the value $\delta$ itself does not carry much meaning, it should relate to the input trees themselves.
Secondly, the user should easily be able to identify the parts of a tree that are combined (that is, have the same image) or ignored (that is, are not mapped to) under the interleaving.
To summarize, we require that a user can effectively
\begin{enumerate}[start=1, label={\textbf{R\arabic*}},noitemsep]
\item \label{req:delta}
determine $\delta$ relative to the height of the trees;
\item \label{req:image}
determine which parts of the trees have points mapped to them and which do not;
\item \label{req:same-image}
determine whether points have the same image.
\end{enumerate}
Last but not least, the visualization should scale to ordered merge trees with about 100 leaves.

As mentioned above, currently existing visualizations of interleavings are not designed to support data exploration and mostly do not satisfy these requirements.
Most commonly, a shift map of an interleaving is visualized by explicitly drawing arrows from points in the tree to their image in the other tree (see \autoref{fig:vis:arrows})~\cite{agarwal2018computing, morozov2013interleaving, touli2022fpt}. This visualization might become too complex for large inputs, as a complete visualization may require as many arrows as there are leaves in the trees, and many of the arrows may cross in the visualization.
However, they do make it easy to find the image of a single point.
Another way interleavings have been visualized is by drawing the image of a tree slightly offset on the other tree (see \autoref{fig:vis:project})~\cite{gasparovich2019intrinsic, pegoraro2021graph}.
Such visualizations have low visual complexity and appear to satisfy \ref{req:image} and \ref{req:delta} quite well. However, they have not been designed to satisfy \ref{req:map} and \ref{req:same-image}, even in such a small example.

Every interleaving induces a labeling on the trees. Hence, instead of visualizing an interleaving directly, one could visualize the corresponding labeling instead.  For example, the trees in \autoref{fig:vis:project} are labeled and can be combined with two matrices that describe the heights of the lowest common ancestors of each pair of labels (see \autoref{fig:vis:matrix})~\cite{curry2022decorated}. This visualization is complete, as a label is assigned to every leaf, and the matrix part of this visualization seems to scale well. However, the labelings provide only local information and the behavior of the shift map on parts of the tree without labels has to be derived by the user from those parts that do have labels. Furthermore, integrating the information from the matrices with the trees is non-trivial and both \ref{req:delta} and \ref{req:image} are not met, since the heights of labels do not necessarily carry any relevant information.

An interleaving is a type of matching between two trees. Hence in principle any visualization that matches and compares trees could be used to illustrate interleavings. However, drawings of trees, especially when augmented with matching lines~\cite{DBLP:journals/cgf/HoltenW08, DBLP:journals/bioinformatics/ScornavaccaZH11, wetzels2024merge} or other visual overlays~\cite{DBLP:journals/tog/MunznerGTZZ03}, contain so much information that they quickly become visually too complex for larger inputs (see \autoref{fig:vis}), similarly to the explicit ``arrow drawings'' mentioned above. 

\cparagraph{Contributions and organization.}
In this paper, we introduce ParkView: a schematic and scalable visual encoding for monotone interleavings. 
We prove properties of monotone interleavings and use them to compute a compressed visual encoding of the interleaving, which still retains all relevant information to satisfy Requirements R0--R3.
Specifically, to represent a shift map, ParkView decomposes the two merge trees into few components such that a component in one tree maps entirely to one component in the other tree.
See \autoref{fig:teaser}: the points in the left tree enclosed by shape 1 (a \emph{hedge}) map to the points in the right tree on segment 1 (an \emph{active path}).
The properties of a monotone interleaving allow us to match components left to right, based on the position of the lowest leaf for hedges and the $x$-position for active paths.
Matching components are also assigned the same color.
The drawings of the two shift maps naturally combine and together show the interleaving.

We first define the decompositions mentioned above and detail the visual design of ParkView.
To keep the visual complexity low, we prefer decompositions with few components; in \autoref{sec:guarantees} we define and prove which decompositions are optimal in this sense (\autoref{thm:heavy-path-branch}).
We also use properties of monotone interleavings to prove that three colors suffice to color the hedges such that neighboring hedges have distinct colors (\autoref{thm:3-colorable}).
These proofs lead to algorithms to compute ParkView, which are described in \autoref{sec:algorithms}.
We implemented our algorithms and showcase results for real-world data sets in \autoref{sec:experiments}. 
Our code is openly available.\footnote{\url{https://github.com/tue-alga/visualizing-interleavings}\label{fn:code}} 
We close with a discussion of our results and possible avenues for future work.

\cparagraph{Note on terminology.}
We designed ParkView for monotone interleavings, hence we use ``interleaving'' to mean ``monotone interleaving'' in the remainder of this paper.
The terms ``path'' and ``branch'', along with their ``decompositions'', have been used in the literature with different meanings in different research areas.
In the context of edit distances for merge trees (cf.~\cite{wetzels2022branch}) a branch (decomposition) typically corresponds to our path (decomposition).
Our terminology mostly follows the conventions from the area of graph theory.


\section{Visual Design}
\label{sec:design}

\addtocounter{figure}{1}

\begin{figure*}[b]
    \centering
    \includegraphics{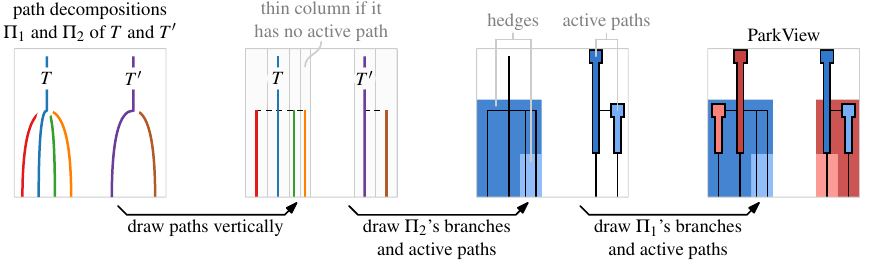}
    \caption{
    We visually encode the branches of a path-branch decomposition via rectilinear enclosing shapes called hedges. A branch in one tree maps to an active path in the other tree; these active paths are drawn as thick colored vertical segments with a square marker at the top.}
    \label{fig:pipeline}
\end{figure*}

Our visual design is built upon an optimal decomposition of the merge trees into paths and branches, which represent those parts of the two trees that are mapped to each other by the interleaving. Below we first describe our decomposition in detail (\autoref{sec:path-branch-decomposition}), and then discuss the visual encoding of all parts (\autoref{sec:drawing-merge-trees}).

\subsection{Path-Branch Decomposition}
\label{sec:path-branch-decomposition}
Our input are two merge trees $T$ and~$T'$ and two shift maps $\alpha$ from~$T$ to~$T'$ and $\beta$ from~$T'$ to~$T$. Recall that each point of the tree has a specific height. In the following we describe the decomposition based on the shift map~$\alpha$; the decomposition based on~$\beta$ is symmetric.

First, we decompose~$T'$ into a \emph{path decomposition}~$\Pi$: a set of height-monotone paths~$\pi$ that each start at a leaf (the \emph{bottom} of~$\pi$) and end at an internal vertex of~$T'$ (the \emph{top} of~$\pi$) or, for one path, at infinity. To make sure the paths of~$\Pi$ are disjoint and exactly cover~$T'$, we consider each path~$\pi$ open at its top; that is, $\pi$ does not contain its top. Alternatively, we can define a path decomposition bottom-up. For a vertex~$v$ of~$T'$, let the \emph{up edge} be the one edge with increasing height incident to~$v$, and let the \emph{down edges} be the other edges incident to~$v$. We now define a path decomposition by selecting, for each internal vertex~$v$, one of the down edges of~$v$ as the \emph{through edge} of~$v$. The path decomposition is then built by starting a path at each leaf of~$T'$, and for each internal vertex~$v$ letting the incoming path from the through edge continue, while the incoming paths from the remaining down edges end at~$v$.

Each path~$\pi \in \Pi$ induces a \emph{branch}~$B_\pi$ in~$T$: the part of~$T$ that $\alpha$ maps to~$\pi$. The branch $B_{\pi}$ can either be empty, or consist of a single connected component (a \emph{simple branch}), or consist of multiple connected components (a \emph{compound branch}) (see \autoref{fig:branches}). 
The complete set of branches~$B_\pi$ forms a decomposition of~$T$, which we call the \emph{branch decomposition} of~$T$. Together, we call the paths in~$T'$ and the branches in~$T$ a \emph{path-branch decomposition} for~$\alpha$.

\addtocounter{figure}{-2}

\begin{figure}[t]
    \hspace*{\fill}
    \begin{subfigure}{0.3\columnwidth}
        \centering
        \includegraphics[page=1]{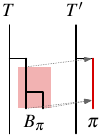}
        \caption{Simple branch}
    \end{subfigure}
    \hfill
    \begin{subfigure}{0.3\columnwidth}
        \centering
        \includegraphics[page=2]{branches}
        \caption{Compound branch}
    \end{subfigure}
    \hfill
    \begin{subfigure}{0.3\columnwidth}
        \centering
        \includegraphics[page=3]{branches}
        \caption{Empty branch}
    \end{subfigure}
    \hspace*{\fill}

    \caption{Examples of branches~$B_\pi$ for a path~$\pi$.}
    \label{fig:branches}
\end{figure}

\addtocounter{figure}{1}
    
A shift map admits many possible path-branch decompositions. Each path-branch decomposition contains the same number of paths, and hence the same number of branches, which is equal to the number of leaves of~$T'$. However, the number of branch components can differ per decomposition. To minimize visual complexity, we aim to construct a path-branch decomposition that minimizes (1) the maximum number of branch components per path and (2) the total number of branch components. In \autoref{sec:guarantees} we prove that there is a form of a heavy-path decomposition that optimizes both criteria simultaneously (\autoref{thm:heavy-path-branch}). Moreover, such an optimal path-branch decomposition can be computed in linear time using a comparatively simple greedy algorithm (see \autoref{sec:algorithms}).

\begin{figure}[t]
    \begin{subfigure}{0.32\columnwidth}
        \centering
        \includegraphics[page=4, width=\textwidth]{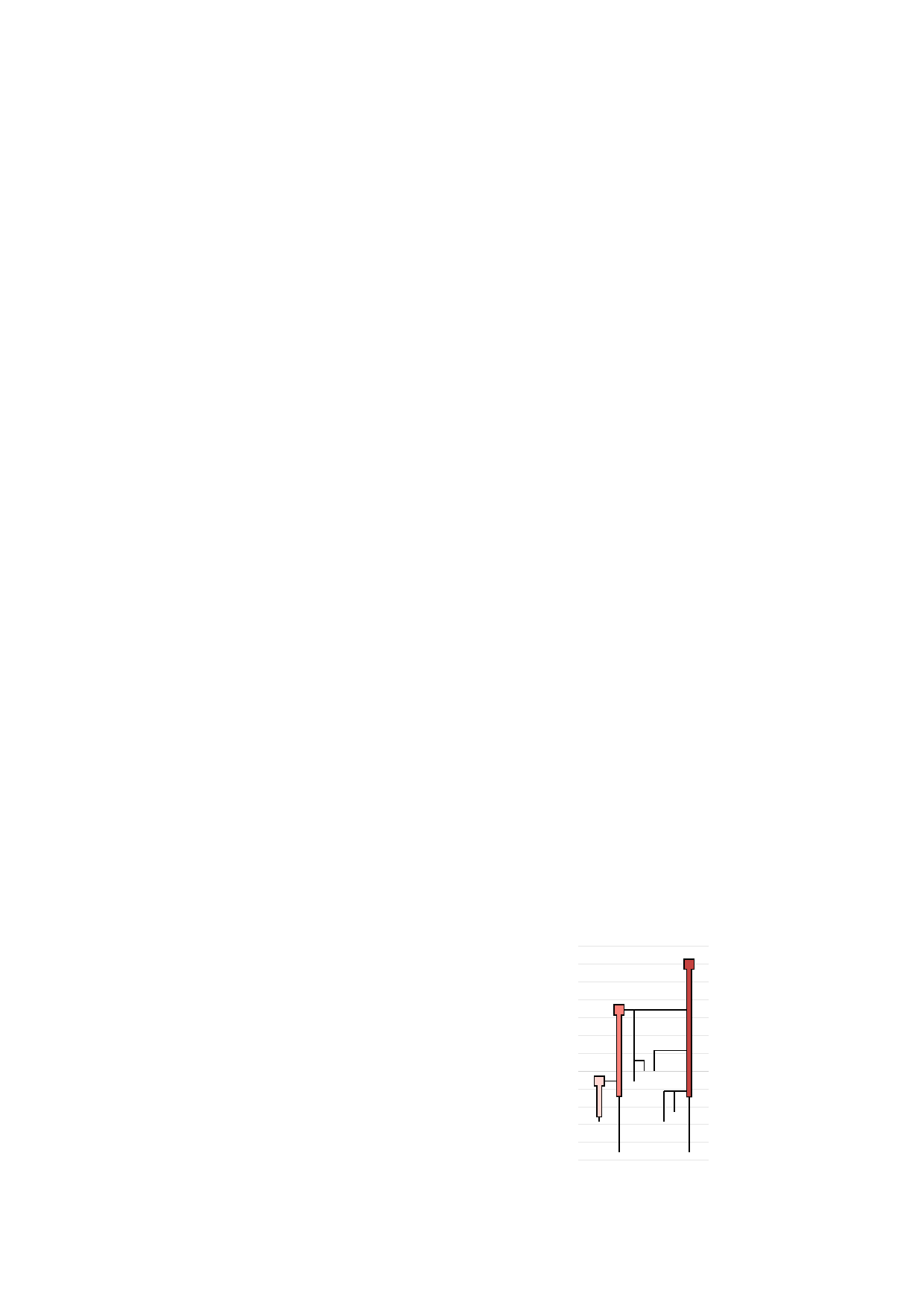}
        \caption{Tight}
        \label{fig:hedge-drawings:tight}
    \end{subfigure}
    \hfill
    \begin{subfigure}{0.32\columnwidth}
        \centering
        \includegraphics[page=2, width=\textwidth]{figures/hedge-example.pdf}
        \caption{ParkView}
        \label{fig:hedge-drawings:ParkView}
    \end{subfigure}
    \hfill
    \begin{subfigure}{0.32\columnwidth}
        \centering
        \includegraphics[page=3, width=\textwidth]{figures/hedge-example.pdf}
        \caption{Min-link}
        \label{fig:hedge-drawings:min-link}
    \end{subfigure}
    \caption{Approaches to creating enclosing shapes. The tight approach resembles TreeJuxtaposer~\cite{DBLP:journals/tog/MunznerGTZZ03} (\autoref{fig:vis:TreeJuxtaposer}). We use (b).}
    \label{fig:hedge-drawings}
\end{figure}

\subsection{Visual Encoding}
\label{sec:drawing-merge-trees}
ParkView visualizes an interleaving $(\alpha, \beta)$ between two ordered merge trees by superimposing drawings of path-branch decompositions of its shift maps $\alpha$ and $\beta$. 
Our visual encoding of these path-branch decompositions is most easily understood procedurally; \autoref{fig:pipeline} gives an overview.

\pagebreak
\cparagraph{Path decomposition.} 
The backbone of ParkView is a drawing of the path decompsitions of the trees. 
We draw each path as a thin, black, vertical line segment.
We draw them in a left-to-right order corresponding to the leaf ordering imposed by the tree. 
Therefore, we can think of our drawing as being divided into a series of fairly narrow columns, each containing exactly one path (and hence, exactly one leaf) of the tree.
We connect the paths with horizontal line segments, each of which represents an internal vertex of the merge tree.
As the paths cover the entire merge tree, our representations of the paths and vertices together display the structure of the tree.

Our rectilinear merge tree drawing style is similar to that of Pont et al.~\cite{pont2022wasserstein}. 
In their drawing, low-persistence branches receive little emphasis. 
We similarly de-emphasize parts of the trees: if none of the points in a column are mapped to then we narrow the column.
 
\cparagraph{Active paths.}
Let $B_{\pi}$ be a branch of a path $\pi$.
If $B_{\pi}$ is not empty, then $\alpha$ maps $B_{\pi}$ to a contiguous part of~$\pi$: the \emph{active path} $\pi^*$. 
To visually encode an active path we thicken the corresponding part of the tree and fill it with color, such that it can be visually matched to our encoding of~$B_\pi$ which will have the same color. 
At the top of $\pi^*$, to help the user determine color accurately, we place a square glyph with the same color.
Note that an active path $\pi^*$ always forms the top part of $\pi$; therefore, the square glyph is at the top of $\pi$ as well.

\cparagraph{Hedges.}
We represent each branch~$B_\pi$ by a \emph{hedge}~$H_\pi$: a rectilinear shape enclosing~$B_\pi$. Our design process for hedges was guided by the following three criteria. The hedges should (1) stay close to the tree structure, (2) have low complexity, and (3) have large area. Criterion (1) helps with finding points in the tree (which is a prerequisite for \ref{req:map}, \ref{req:image}, and \ref{req:same-image}). Criteria (2) and~(3) help reduce the visual complexity and allow the user to more easily perceive the hedges' colors. There is a trade-off between the criteria: a tight drawing of the hedges along the tree (\autoref{fig:hedge-drawings:tight}) stays close to the tree structure, but the resulting hedges have small area and high complexity. In contrast, a drawing that minimizes the links used in the union of hedges and the individual hedges (\autoref{fig:hedge-drawings:min-link}) has low complexity and large area, but it may not follow the tree structure very well. We settled on a design in between the two extremes (\autoref{fig:hedge-drawings:ParkView}) that satisfies all criteria reasonably well.

In our design, each hedge is histogram-shaped: it is the union of a set of axis-aligned rectangles called \emph{bars} whose tops are aligned.
We call the height of the highest (lowest) point in a branch $B_\pi$ its \emph{top} (\emph{bottom}) \emph{height}.
The tops of the bars in a hedge $H_\pi$ have height equal to the the top height of $B_\pi$.
A hedge consists of three types of bars, which we call \emph{tree bars}, \emph{fillers}, and \emph{bridges}.
For each path~$\sigma$ that contains points in~$B_\pi$, in the column of~$\sigma$ we add a \emph{tree bar} whose bottom height is the height of the lowest point on $\sigma$ that is in~$B_\pi$. The union of these bars may not be connected as the set of columns of paths~$\sigma$ may not be contiguous. In this case, we connect consecutive leaves in the same branch component by adding \emph{fillers} in the columns between them.
The height of such a sequence of fillers is the smallest height of the two bars they connect (\autoref{fig:drawing-hedges:columns}).
It is not obvious that this connection does not cause overlap between hedges; however, we argue in \autoref{sec:guarantees:coloring} that this is the case.
For a compound branch~$B_{\pi}$, we draw its individual branch components like before, and then between them we add a \emph{bridge}: a horizontal connector at the top of the hedge (\autoref{fig:drawing-hedges:connecting-columns}). The height of the bridge is less than the height of the shortest bar in the hedge.

\begin{figure}[t]
    \hspace*{\fill}
    \begin{subfigure}[t]{0.45\columnwidth}
        \centering
        \includegraphics[page=4]{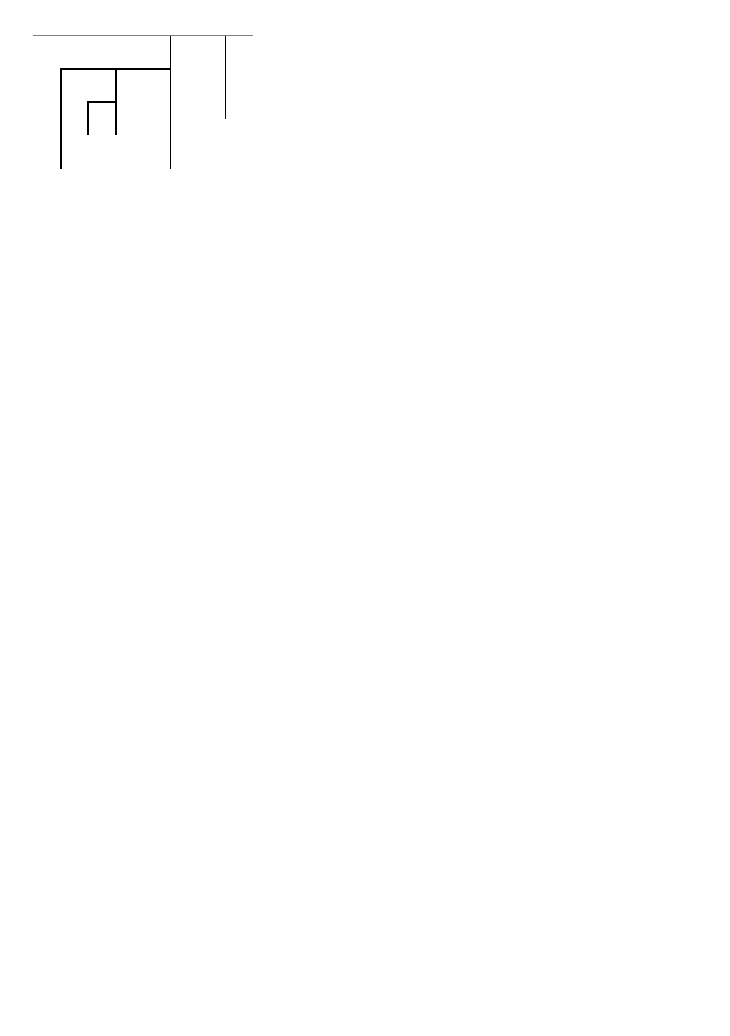}
        \caption{Bars}
        \label{fig:drawing-hedges:columns}
    \end{subfigure}
    \hfill
    \begin{subfigure}[t]{0.45\columnwidth}
        \centering
        \includegraphics[page=5]{drawing-hedges.pdf}
        \caption{Hedge}
        \label{fig:drawing-hedges:connecting-columns}
    \end{subfigure}
    \hspace*{\fill}
    \caption{A hedge $H_\pi$ consists of three types of bars: (a) tree bars starting at leaves of $B_\pi$; fillers that connect leaves of a component; and bridges that connect different components of $B_\pi$;  (b) the hedge.}
    \label{fig:drawing-hedges}
\end{figure}

\cparagraph{ParkView.} 
The complete ParkView visualization draws the merge trees side-by-side, such that the height difference between a point and its image is~$\delta$. We further add grid lines to help determine~$\delta$ (\ref{req:delta}), interpret heights, and match points to their image (\ref{req:map}). We space these grid lines $\delta$ (or, if $\delta$ is large, a fraction of~$\delta$) apart.

\cparagraph{Properties.}
Our design has several properties that help the user visually match each branch in one tree to its corresponding active path in the other tree.
Firstly, because we respect the leaf ordering of the ordered merge trees, the left-to-right order of the lowest leaf in each hedge matches the left-to-right order of the corresponding active paths.
Secondly, the maximal height of a hedge is equal to the height of the corresponding active path.
Lastly, as mentioned, we use color to match branches and active paths. 
These colors should be such that adjacent hedges have distinct colors, so that the user can distinguish the different hedges. 
Furthermore, in ParkView we aim to use as few different colors as possible to ensure that they can be easily distinguished.
In fact, the hedges in ParkView are 3-colorable; we prove this in \autoref{sec:guarantees:coloring} (\autoref{thm:3-colorable}).
Hence, in ParkView we need only three colors per tree.
We use two distinct hues: red and blue, one for each path-branch decomposition.

\section{Properties of Monotone Interleavings}\label{sec:guarantees}
In this section, we discuss the structural properties of an interleaving that underlie ParkView.
First, we give a specific path-branch decomposition, which we call the \emph{heavy path-branch decomposition}, and we show that it is \emph{optimal}: it minimizes (1) the maximum number of branch components per path and (2) the total number of branch components.
Secondly, we exploit the structure of a shift map and the fact that we use a heavy path-branch decomposition to show a vital property of our hedge design: the set of hedges is 3-colorable, that is, we can always color them using at most three colors such that no two adjacent hedges have the same color.

\subsection{Heavy Path-Branch Decompositions}\label{sec:heavy-path-branch-decomposition}
Let $\alpha$ be a shift map from~$T$ to~$T'$.
As noted before, we can define a path decomposition of~$T'$ by selecting a through edge for each internal vertex~$v$. Let $B_e$ be the part of~$T$ that $\alpha$ maps to the interior of~$e$, and let the \emph{weight} of~$e$ be the number of connected components of $B_e$. We define a \emph{heavy path decomposition} by selecting the through edge of~$v$ to be a down edge of~$v$ with maximum weight.

Next we prove that a heavy path-branch decomposition is optimal.
We refer to the highest edge $\pi$ traverses as its \emph{top edge}. 
We define the \emph{size} of a branch~$B$ as the number of connected components it consists of. We first show that for a given path $\pi$, the size of its induced branch is equal to the weight of $\pi$'s top edge.

\begin{lemma}\label{lem:size-weight}
    Let~$\pi$ be a path with top edge~$e$. Then the size of~$B_\pi$ is equal to the weight of~$e$.
\end{lemma}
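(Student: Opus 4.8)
The plan is to view the top-edge preimage $B_e$ as sandwiched between $B_\pi$ and a common sub-branch, and to tie all three together by an ``upward'' retraction that moves each point within its own connected component; this forces $B_\pi$, $B_e$, and the sub-branch to have the same number of connected components, which is by definition the size of $B_\pi$ on one side and the weight of $e$ on the other. Write $f$ and $f'$ for the height functions of $T$ and $T'$. I would first record two properties of the shift map $\alpha$: it raises heights by exactly $\delta$, i.e. $f'(\alpha(x)) = f(x)+\delta$ for every $x$; and it preserves the ancestor relation, i.e. if $a$ is an ancestor of $x$ in $T$ then $\alpha(a)$ is an ancestor of $\alpha(x)$ in $T'$. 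The latter holds because $\alpha$ maps the ascending arc from $x$ to $a$ onto a connected subset of $T'$ of maximum height $f'(\alpha(a))$; this subset contains the $T'$-arc between $\alpha(x)$ and $\alpha(a)$, and hence their lowest common ancestor, which therefore cannot exceed height $f'(\alpha(a))$, forcing it to equal $\alpha(a)$. Together these facts say that the ancestor of $x$ at height $h\ge f(x)$ is mapped by $\alpha$ to the unique ancestor of $\alpha(x)$ at height $h+\delta$.

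Let $v_\pi$ be the top vertex of $\pi$ and $h_{\mathrm{top}}=f'(v_\pi)$ (for the single path running to infinity set $h_{\mathrm{top}}=\infty$; the argument only simplifies there). The key structural observation is that $\pi$ is an initial segment of a leaf-to-root ray of $T'$: it contains every ancestor, strictly below $v_\pi$, of each of its points. Consequently $B_\pi$ is closed under moving up towards the root, as long as one stays below height $h_{\mathrm{top}}-\delta$: if $x\in B_\pi$ and $f(x)\le h< h_{\mathrm{top}}-\delta$, then the ancestor $x_h$ of $x$ at height $h$ is mapped to the ancestor of $\alpha(x)$ at height $h+\delta< h_{\mathrm{top}}$, which still lies on $\pi$, so $x_h\in B_\pi$ and lies in the same component as $x$.

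Next let $v_e$ be the bottom vertex of $e$ and put $h^{*}=f'(v_e)-\delta$. Since $e$ is the highest edge of $\pi$, the part of $\pi$ at heights above $f'(v_e)$ is exactly the interior of $e$; hence $B_e=\{x\in B_\pi: f(x)>h^{*}\}$, and in particular $B_e\subseteq B_\pi$. Choose $\eta>0$ with $h^{*}+\eta< h_{\mathrm{top}}-\delta$ (possible because $e$ has positive height) and set $B_\pi^{+}=\{x\in B_\pi: f(x)\ge h^{*}+\eta\}$, so that $B_\pi^{+}\subseteq B_e\subseteq B_\pi$. Let $r(x)$ be the ancestor of $x$ at height $\max(f(x),\,h^{*}+\eta)$. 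Then $r$ is continuous, maps $B_\pi$ onto $B_\pi^{+}$, is the identity on $B_\pi^{+}$, and by the previous paragraph moves each point of $B_\pi$ within its own connected component (the ascending arc used stays in $B_\pi$); hence $r$ induces a bijection between the components of $B_\pi$ and those of $B_\pi^{+}$. For a point of $B_e$ that same arc also stays in $B_e$, since all its heights exceed $h^{*}$ (and its $\alpha$-images, lying on $\pi$ above height $f'(v_e)$, lie in the interior of $e$); so $r$ likewise induces a bijection between the components of $B_e$ and those of $B_\pi^{+}$. Therefore $B_\pi$, $B_e$, and $B_\pi^{+}$ all have the same number of connected components, which is by definition both the size of $B_\pi$ and the weight of $e$.

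I expect the main obstacle to be the bookkeeping around the half-open top of $\pi$: the whole argument depends on the fact that a point of $B_\pi$ never has to be pushed all the way up to height $h_{\mathrm{top}}-\delta$ while remaining in $B_\pi$ (and, when it starts in $B_e$, in $B_e$), which is exactly where the ``$\pi$ is an initial segment of a root-ray'' observation and the slack $\eta$ enter; getting the strict-versus-nonstrict inequalities consistent is the only delicate point. A minor preliminary is justifying ancestor-preservation of a shift map directly from its definition, which is standard but underlies everything that follows.
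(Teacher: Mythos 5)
Your proof is correct, and it shares the paper's central mechanism -- pushing points of $T$ upward toward the root and using continuity plus the exact $\delta$-shift to track their images along $\pi$ -- but it packages the argument differently. The paper shows directly that each connected component of $B_\pi$ contains \emph{exactly one} component of $B_e$: existence by following a point upward until its image lands on $e$ (essentially your retraction restricted to one point), and uniqueness by a separate combinatorial argument in which two components of $B_e$ inside one component of $B_\pi$ would yield two distinct paths between two points near the top (one inside the component, one through their lowest common ancestor, which sits too high to lie in $B_\pi$), contradicting acyclicity of $T$. You instead introduce the common top slice $B_\pi^{+}$ and a single continuous retraction $r$ under which both $B_\pi$ and $B_e$ surject onto $B_\pi^{+}$ while each point stays in its own component; injectivity and surjectivity of the induced component maps then come for free from disjointness of components, so you never need the lca/cycle argument. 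The price is some extra bookkeeping (the slack $\eta$, the half-open top of $\pi$, continuity of the ancestor map), which you handle correctly; what you gain is a more uniform argument that yields both bijections at once and makes the ``no two components can merge'' step a formal consequence of $r(C)\subseteq C$ rather than a separate geometric contradiction. Both proofs are valid; yours is arguably the cleaner abstraction, the paper's the more elementary and self-contained one.
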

\begin{proof}
    Let $v$ be the top of~$\pi$ and let $h \coloneqq f(v) - \delta$. As $e$ is in~$\pi$, we have that $B_e \subseteq B_\pi$. To prove the lemma it hence suffices to argue that each connected component $C$ of~$B_\pi$ contains exactly one connected component of~$B_e$.
    
    To show that $C$ contains at least one connected component of~$B_e$, we show that $C$ contains a point~$x$ in $B_e$. Take any point~$x' \in C$. If $\alpha(x')$ lies in the interior of $e$, then we take $x \coloneqq x'$. Otherwise, we continuously follow the path from $x'$ to the root of~$T$. As $\alpha$ is continuous, the images of the points on the path (in~$T'$) also form a continuous path. Furthermore, as $\alpha$ is a $\delta$-shift map, the images of these points also have a continuously increasing height value. It follows that at some point, we find points whose image is on~$e$. Take such a point~$x$. By definition $x \in B_e$ (and thus also in $B_\pi$). Furthermore, all points between $x'$ and~$x$ on our path map to points on~$\pi$ in~$T'$. Hence, they are all part of $B_\pi$; hence, they are all part of the same connected component of $B_\pi$, namely~$C$.
    
    To show that $C$ contains at most one connected component of~$B_e$, assume for contradiction that there are two distinct connected components $C_1$ and~$C_2$ of~$B_e$ in~$C$. Then, as before, these components contain points $x_1$ and~$x_2$, respectively, at height $h - \varepsilon$ for some $\varepsilon > 0$ chosen such that no vertices of~$T$ have height between $h$ and $h - \varepsilon$.
    Now there is a path~$\rho$ from~$x_1$ to~$x_2$ entirely within~$C$, as $C$ is connected. There also is a distinct path~$\rho'$ from~$x_1$ to~$x_2$ via the lowest common ancestor~$x_3$ in~$T$ of $x_1$ and~$x_2$. Note that~$f(x_3) \geq h$, so $\rho'$ is not entirely within~$C$; that is, $\rho \neq \rho'$. The union of $\rho$ and~$\rho'$ hence contains a cycle, contradicting the fact that $T$ is a tree.
\end{proof}

\noindent
We are now ready to show that any heavy path-branch decomposition is optimal.
\begin{theorem}\label{thm:heavy-path-branch}
    Any heavy path-branch decomposition minimizes the maximum number of branch components per path and the total number of branch components.
\end{theorem}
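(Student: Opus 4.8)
The plan is to reduce the statement to a purely combinatorial optimization on the weighted tree $T'$ and then observe that both objectives decompose into independent per-vertex subproblems, each of which is solved greedily by the heavy choice. By \autoref{lem:size-weight}, the size of the branch $B_\pi$ of a path $\pi$ equals the weight $w(e_\pi)$ of its top edge $e_\pi$, so it suffices to control the weights of the top edges. The key structural observation is that the top edges are distributed vertex by vertex: for each internal vertex $v$ with down edges $e_1,\dots,e_k$, exactly one is chosen as the through edge $t(v)$, and the remaining $k-1$ edges are precisely the top edges of the $k-1$ paths that terminate at $v$. Exactly one further path runs to infinity; its branch is a nonempty up-set of $T$ and hence a single connected component, a constant $m_\infty$ that does not depend on the decomposition. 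Consequently, the multiset of branch sizes over all paths is $\{m_\infty\} \cup \bigcup_{v}\{\, w(e) : e \in \mathrm{down}(v),\ e \neq t(v)\,\}$, where the union is over internal vertices of $T'$.

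For the total number of branch components, writing $W_v \coloneqq \sum_{e \in \mathrm{down}(v)} w(e)$, the observation above gives a total of $m_\infty + \sum_v \big(W_v - w(t(v))\big)$. Since every finite edge of $T'$ is a down edge of exactly one vertex, $\sum_v W_v$ equals the sum of all edge weights and is therefore a constant independent of the path decomposition, as is $m_\infty$. Minimizing the total is thus equivalent to maximizing $\sum_v w(t(v))$, and because the choices of through edges at distinct vertices are completely independent, this sum is maximized exactly when each $t(v)$ is a down edge of maximum weight --- that is, exactly when the decomposition is heavy.

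For the maximum number of branch components per path, the quantity to minimize is $\max\!\big(m_\infty,\ \max_v \max_{e \in \mathrm{down}(v),\, e \neq t(v)} w(e)\big)$. For a single vertex $v$, choosing $t(v)$ to be a heaviest down edge leaves the residual maximum equal to the second-largest weight among the down edges of $v$, which is no larger than the residual maximum produced by any other choice. Since the per-vertex terms are independent and $m_\infty$ is a fixed lower bound shared by every decomposition, the heavy choice simultaneously minimizes every term and hence the overall maximum. Thus a heavy path-branch decomposition minimizes both objectives, and since ties among maximum-weight down edges may be broken arbitrarily, this holds for \emph{any} heavy path-branch decomposition.

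I expect the only non-routine part to be the bookkeeping around the top edges: verifying the bijection between the paths terminating at $v$ and the non-through down edges of $v$, and correctly handling the single path that escapes to infinity, for which \autoref{lem:size-weight} does not literally apply (it has no top vertex) but whose branch is easily seen to be a single up-set. Once this is established, both objective functions separate over the internal vertices of $T'$, the greedy heavy rule is optimal for each subproblem, and no exchange argument across vertices is required.
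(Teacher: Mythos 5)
Your proof is correct and follows essentially the same route as the paper: reduce via \autoref{lem:size-weight} to the weights of top edges, observe that the non-through down edges at each internal vertex are exactly the top edges of the paths ending there, and note that both objectives then decompose into independent per-vertex subproblems solved by the heavy (greedy) choice. Your explicit bookkeeping for the single path running to infinity (whose branch is a nonempty up-set and hence a single component, independent of the decomposition) is a detail the paper's proof glosses over, but it does not change the argument.
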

\begin{proof}
    Let $\Pi$ be a path decomposition.
    Recall that $\Pi$ can be thought of as selecting one through edge for each vertex~$v$ in~$T'$.
    Define the \emph{cost} of~$v$ as the sum of the weights of $v$'s down edges, excluding its through edge. As these edges are exactly the top edges ending at~$v$, by \autoref{lem:size-weight}, the cost of~$v$ is the number of branch components corresponding to the paths ending at~$v$. Then, the sum of costs of all vertices in~$T'$ is the total number of branch components induced by~$\Pi$. This sum is minimized by minimizing the cost for each vertex~$v$, which is achieved by maximizing the weight of its through edge, that is, picking a heavy edge as the through edge. A similar argument holds for minimizing the maximum number of branch components per path.
\end{proof}

\noindent
To obtain a heavy path-branch decomposition, we can thus use a simple greedy algorithm, which we explain in \autoref{sec:algorithms}.

\subsection{Coloring Hedges}\label{sec:guarantees:coloring}
Recall from \autoref{sec:drawing-merge-trees} that each hedge~$H$ is histogram-shaped.
We define the \emph{left} (\emph{right}) \emph{side} of~$H$ as the left (right) side of the leftmost (rightmost) bar of~$H$.
Two distinct hedges are \emph{adjacent} if their boundaries, excluding corners, overlap.
A hedge $P$ is the \emph{parent} of~$H$ if~$P$ is adjacent to the top of~$H$; we call~$H$ a \emph{child} of~$P$.

We show that the set of hedges in ParkView is 3-colorable.
The proof makes use of three properties of our hedge design. Hedges:
\begin{itemize}[noitemsep]
    \item[(i)] are pairwise interior disjoint;
    \item[(ii)] have at most one parent;
    \item[(iii)] have no hedge adjacent to the bottom of their longest bar.
\end{itemize}

\noindent
Our proofs of these properties rely on two observations about the drawing of~$T$ (\autoref{fig:structural-insights}).
Full proofs are in \autoref{app:proof}.

\begin{figure}
    \hspace*{\fill}
    \begin{subfigure}{0.42\columnwidth}
        \centering
        \includegraphics[page=1]{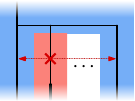}
        \caption{No point can be surrounded by points from another branch}
    \end{subfigure}
    \hfill
    \begin{subfigure}{0.43\columnwidth}
        \centering
        \includegraphics[page=2]{structural-insights}
        \caption{No leaves can be positioned above a horizontal segment}
    \end{subfigure}
    \hspace*{\fill}
    \caption{Structural properties of ParkView.}
    \label{fig:structural-insights}
\end{figure}

\begin{restatable}{observation}{nothorizontallysandwiched}
    \label{lem:not-horizontally-sandwiched}
    A point~$x$ of~$T$ at height~$h$ cannot be surrounded by two points $x_1$ and~$x_2$ at height~$h$ of another branch.
\end{restatable}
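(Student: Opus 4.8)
The plan is to argue by contradiction using the monotonicity of the shift map $\alpha$ that induces the branch decomposition of $T$. Suppose $x$ lies in a branch $B_\pi$ and is surrounded, at height $h$, by two points $x_1$ and $x_2$ of a branch $B_{\pi'}$ with $\pi' \neq \pi$; say $x_1$ lies left of $x$, which lies left of $x_2$ in the column order of the drawing. Since the columns of $T$ are placed in the leaf order of $T$, this is precisely the left-to-right order of $x_1$, $x$, $x_2$ along the level set $f^{-1}(h)$ of $T$.

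First I would translate the branch memberships into statements about $\alpha$: since $B_\pi$ is the part of $T$ that $\alpha$ maps into $\pi$ and likewise for $B_{\pi'}$, we have $\alpha(x) \in \pi$ and $\alpha(x_1), \alpha(x_2) \in \pi'$, all three images lying at height $h + \delta$ in $T'$. Because $\pi'$ is a height-monotone path, it contains exactly one point at height $h + \delta$; hence $\alpha(x_1) = \alpha(x_2)$, and I would name this common point $p \in \pi'$. Next I would invoke monotonicity of $\alpha$: a monotone $\delta$-shift map preserves the left-to-right order of points of equal height, so from the order $x_1, x, x_2$ along $f^{-1}(h)$ we obtain that $\alpha(x)$ lies between $\alpha(x_1) = p$ and $\alpha(x_2) = p$ in the left-to-right order along the level set $f'^{-1}(h+\delta)$ of $T'$. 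Since that order is a total order on the points of the level set, this forces $\alpha(x) = p$, so $\alpha(x) \in \pi'$. But $\alpha(x) \in \pi$ as well, and the paths of a path decomposition are pairwise disjoint, so $\pi = \pi'$, contradicting our assumption.

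The step I expect to be the main obstacle is making precise what "monotone" means for points that are not leaves: one has to spell out (or quote from the definition of monotone shift maps) that respecting the leaf orders of the two trees is equivalent to preserving the induced left-to-right order on every level set, and that this is exactly the order realized by the column placement in the ParkView drawing of $T$ and $T'$. A secondary, minor point is the boundary case where $h + \delta$ exceeds the root height of $T'$, so that $f'^{-1}(h+\delta)$ is a single point on the root edge; there $\pi = \pi'$ is forced outright and there is nothing to prove. Once these are settled, the remainder is the short order-chasing sketched above.
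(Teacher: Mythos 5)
Your proposal is correct and follows essentially the same argument as the paper: both use the order-preservation of the monotone shift map on the level set at height $h$, the fact that a height-monotone path meets the level set at height $h+\delta$ in a single point to get $\alpha(x_1)=\alpha(x_2)$, and the sandwiching under the total order $\leq'_{h+\delta}$ to force $\alpha(x)$ onto the same path, contradicting disjointness. The "obstacle" you flag---formalizing monotonicity on level sets rather than just leaves---is exactly what the paper's appendix supplies via the equivalent definition of ordered merge trees using the per-height total orders $\leq_h$.
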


\begin{restatable}{observation}{noleafabovehorizontal}
    \label{lem:no-leaf-above-horizontal}
    In the drawing of a tree~$T$, no leaves are positioned vertically above a horizontal segment.
\end{restatable}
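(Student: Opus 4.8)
The plan is to argue directly from how the drawing of $T$ encodes the tree, using crucially that the leaf order $\sqsubseteq$ respects the structure of $T$. First I would set up notation: a horizontal segment $s$ in the drawing represents a unique internal vertex $v$ of $T$, it is drawn at height $f(v)$, and it connects precisely the paths of the path decomposition incident to $v$, that is, the through path that passes through $v$ together with the paths that end at $v$. Each such path enters $v$ through one of its down edges, so the leaf at the bottom of that path lies in the corresponding child subtree of $v$; in particular, every path incident to $v$ has its bottom leaf in the subtree $T_v$ rooted at $v$. Since $\sqsubseteq$ respects the tree, the leaves of $T_v$ form a contiguous block of columns in the drawing, and the horizontal extent of $s$ --- which runs from the leftmost to the rightmost path incident to $v$ --- is contained in that block.

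Next I would assume for contradiction that some leaf $\ell$ of $T$ is positioned vertically above $s$, meaning the column of $\ell$ lies within the horizontal extent of $s$ and $\ell$ is drawn strictly higher than $s$, i.e.\ $f(\ell) > f(v)$. By the first step, the column of $\ell$ lies in the column block of $T_v$; since each column contains exactly one leaf, $\ell$ is a leaf of $T_v$, hence a strict descendant of $v$ in $T$. But then the leaf-to-root path of $\ell$ passes upward through $v$, and since $f$ is strictly increasing along this path we get $f(\ell) < f(v)$, contradicting $f(\ell) > f(v)$. The root requires no special treatment: its horizontal segment lies at the global maximum height $f(\text{root})$, so nothing can be above it, and the unbounded up-edge never enters the argument.

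The only step that needs genuine care --- and the one I would single out as the main obstacle --- is the claim that the horizontal extent of the segment representing $v$ stays within the columns of $T_v$. This is exactly where the hypothesis that $\sqsubseteq$ respects the tree structure is used: without it, leaves lying outside $T_v$ could be interleaved between the columns of the paths incident to $v$, and the observation would fail. Everything else is immediate from the definition of the drawing (one leaf per column, and horizontal segments drawn at vertex height connecting the paths meeting there) and from strict monotonicity of $f$ along leaf-to-root paths.
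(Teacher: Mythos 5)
Your proposal is correct and follows essentially the same route as the paper: both arguments use that the leaf order respects the tree structure to conclude that any leaf whose column lies under the horizontal segment of $v$ must belong to the subtree rooted at $v$, and then derive a contradiction from the strict monotonicity of $f$ along leaf-to-root paths. Your write-up is somewhat more explicit about why the horizontal extent of the segment stays within the column block of $T_v$, a step the paper dispatches with ``clearly''.
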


\noindent
We now show that ParkView satisfies properties (i)--(iii).

\begin{restatable}{lemma}{propertyproof}
    \label{lem:hedges-disjoint}
	Hedges in ParkView satisfy property (i).
\end{restatable}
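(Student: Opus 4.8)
The plan is to prove interior-disjointness \emph{one column at a time}. The drawing is partitioned into narrow vertical slots (columns), one per path~$\sigma$ of the path decomposition of~$T$, and every hedge is histogram-shaped with all of its bar tops at a single height, namely the top height~$t_\pi$ of its branch~$B_\pi$. Consequently, inside any one column~$c$ a hedge~$H_\pi$ occupies a single closed interval $[\ell_\pi(c),t_\pi]$, where $\ell_\pi(c)$ is the bottom of whichever single bar of~$H_\pi$ lies in~$c$ --- a tree bar, a filler, or a bridge. Two histogram-shaped regions are interior-disjoint if and only if their vertical intervals are interior-disjoint in every shared column, so it suffices to fix distinct hedges $H_\pi$, $H_{\pi'}$ and a column~$c$ touched by both, and to show that one of $[\ell_\pi(c),t_\pi]$, $[\ell_{\pi'}(c),t_{\pi'}]$ lies weakly below the other.

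I would first handle the case that both hedges contribute a tree bar to~$c$; let~$\sigma$ be the path of~$c$. Since the branches partition~$T$ and $\sigma$ is height-monotone, $B_\pi\cap\sigma$ and $B_{\pi'}\cap\sigma$ are disjoint subintervals of~$\sigma$, say with $B_\pi\cap\sigma$ the lower one. The only delicate point is that the tree bar of~$H_\pi$ runs up to~$t_\pi$, which may be strictly above $\max f(B_\pi\cap\sigma)$ when $\sigma$ is a short path ending below~$t_\pi$. Here I would use that a $\delta$-shift map preserves ancestry: following a point of~$B_\pi$ up~$\sigma$, its image runs monotonically up the path~$\pi$, so $B_\pi\cap\sigma$ is an interval that either already reaches~$t_\pi$ (no padding, and we are done since the two branch intervals are disjoint) or reaches exactly the top of~$\sigma$. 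In the latter case the same reasoning, applied to~$B_{\pi'}$, forces $B_{\pi'}\cap\sigma$ to lie strictly below $B_\pi\cap\sigma$ and to top out strictly below $\ell_\pi(c)$, so $H_{\pi'}$'s whole interval sits below $H_\pi$'s. Observation~\ref{lem:not-horizontally-sandwiched} can be invoked to settle the boundary case where the two branch intervals meet at a single point.

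The remaining cases, where at least one of the two hedges contributes a filler or a bridge to~$c$, are the heart of the argument. A filler of~$H_\pi$ in column~$c$ means $c$ lies strictly between two consecutive columns of one component of~$B_\pi$; hence $B_\pi$ has points to the left and to the right of~$c$ that are joined inside~$B_\pi$ through a common ancestor lying above column~$c$ in the drawing. I would use Observation~\ref{lem:no-leaf-above-horizontal} to argue that the slice of~$T$ directly above column~$c$ --- exactly where the filler of~$H_\pi$ is drawn --- is covered by~$B_\pi$, so that no competing branch, and hence no competing tree bar, occupies that region; and Observation~\ref{lem:not-horizontally-sandwiched} to rule out any other branch from placing a point into column~$c$ at the heights the filler spans, since such a point would be horizontally sandwiched between the left and right parts of~$B_\pi$. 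Bridges are handled in the same way but are easier: a bridge is an arbitrarily thin strip just under the common top~$t_\pi$, above everything any competing hedge can reach in that column, so it suffices to bound the heights competing bars attain.

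The main obstacle is this last case: one has to convert the purely combinatorial guarantees of Observations~\ref{lem:not-horizontally-sandwiched} and~\ref{lem:no-leaf-above-horizontal} into precise statements about the rectilinear layout --- which $x$-slot a filler occupies, exactly which height range it spans, and the fact that $B_\pi$ ``claims'' that slot and range before any other hedge's bar can enter it. By contrast, once the partition property and the behavior of $\delta$-shift maps along decomposition paths are in hand, the tree-bar-versus-tree-bar case is essentially bookkeeping.
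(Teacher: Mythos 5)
Your overall strategy---reduce interior-disjointness to showing that, in each column, the vertical intervals occupied by distinct hedges are interior-disjoint---is viable, and your tree-bar-versus-tree-bar case is essentially sound: the dichotomy that $B_\pi\cap\sigma$ either reaches the top height $t_\pi$ or runs out at the top of~$\sigma$ does follow from continuity of the shift map along the height-monotone path~$\sigma$, and it correctly controls the padding of tree bars in that case. The problem is that you explicitly leave the filler/bridge case as a plan, and the plan as stated attacks the wrong object. Observations~\ref{lem:not-horizontally-sandwiched} and~\ref{lem:no-leaf-above-horizontal} let you rule out \emph{points of other branches} appearing in column~$c$ at the filler's heights, but the actual threat is another hedge's \emph{bar}: a tree bar of $H_{\pi'}$ in column~$c$ is padded all the way up to $t_{\pi'}$, the top height of the entire branch $B_{\pi'}$, even when $B_{\pi'}\cap\sigma$ stops far lower (and fillers and bridges of $H_{\pi'}$ contain no branch points at all). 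So showing that no competing branch has a point inside the filler's rectangle does not show that no competing bar enters it. Your claim that ``the slice of $T$ directly above column~$c$ is covered by $B_\pi$'' is also not right as stated: if $B_\pi$ had a point on the path of column~$c$, that column would carry a tree bar rather than a filler, so in general there is no point of $T$ at all in that region.

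The missing ingredient is the device the paper uses: fix a height~$h$, and assign to each branch~$B$ the contiguous interval of columns (its ``range'') spanned by the leaves of~$T$ lying in subtrees rooted at $B$'s points at height~$h$ (its ``foliage''). Observation~\ref{lem:not-horizontally-sandwiched} shows the height-$h$ points of distinct branches are unshuffled in the left-to-right order, hence these ranges are pairwise disjoint; and every bar of $H_\pi$ at height~$h$---tree bar, filler, or bridge alike---lies in the range of~$B_\pi$, because the column's leaf (for a tree bar), or the leaves of the flanking tree bars (for a filler or bridge), remain in the foliage of $B_\pi$ for every height up to~$t_\pi$ by following the leaf's root-path \emph{beyond} the top of its own column's path. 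Tracking the ancestor of the column's leaf at height~$h$, rather than the drawn position of branch points, is exactly what neutralizes the padding issue; without some equivalent of it, your filler/bridge case does not close.
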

\begin{proofsketch}
    For each height in our drawing, we consider a horizontal line at that height. This line intersects a number of points of~$T$, which belong to branches. By \autoref{lem:not-horizontally-sandwiched}, these branches partition the line into disjoint intervals that ``belong'' to each branch.
    We then show that using this procedure, each point in a hedge~$H_\pi$ ``belongs'' to the branch~$B_\pi$. As each point ``belongs'' only to a single branch, it follows that no point can be in more than one hedge.
\end{proofsketch}

\begin{restatable}{lemma}{yetanotherproperty}
    \label{lem:no-multiple-parents}
    Hedges in ParkView satisfy property (ii).
\end{restatable}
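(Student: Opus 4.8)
The plan is to show that the parent of a hedge, when it exists, is completely determined by the hedge itself. Let $H_\pi$ be a hedge, let $v$ be the top of the path $\pi\in\Pi$ in $T'$, and set $h^*=f(v)-\delta$. Since every tree bar, filler, and bridge of $H_\pi$ has its top at $h^*$ and together they cover every column between the leftmost column $\sigma_L$ and the rightmost column $\sigma_R$ met by $B_\pi$, the top of $H_\pi$ is a single horizontal segment $S$ at height $h^*$ spanning the contiguous column range $[\sigma_L,\sigma_R]$. Let $\pi'$ be the unique path of the path decomposition of $T'$ that contains the up-edge of $v$; as $v$ is determined by $\pi$, so is $\pi'$. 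I claim the only hedge that can be adjacent to $S$ from above is $H_{\pi'}$, which proves the lemma.

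The first ingredient is that for any branch $B_\rho$ and any column $\sigma$, the intersection $B_\rho\cap\sigma$ is a (possibly empty) height-contiguous sub-path of $\sigma$: restricted to the height-monotone path $\sigma$, $\alpha$ traces a height-monotone path in $T'$, which can enter the root-path $\rho$ of $T'$ and afterwards leave it only at the top of $\rho$, so the intersection with $\rho$ is connected. Two consequences. First, following $T$ upward out of the top of $B_\pi$, the image leaves $\pi$ at $v$ and continues along $v$'s up-edge, which lies in $\pi'$; hence the tree immediately above $B_\pi$ belongs to $B_{\pi'}$. Second, combining height-contiguity with \autoref{lem:not-horizontally-sandwiched} (each branch occupies a contiguous run of tree points on every horizontal line), the set $K$ of columns on which $B_\pi$ attains its top height $h^*$ is a contiguous sub-range of $[\sigma_L,\sigma_R]$; and on every column of $K$ where the tree reaches above $h^*$, that tree lies in $B_{\pi'}$, so $H_{\pi'}$ has a bar whose bottom edge lies on $S$. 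In particular $H_{\pi'}$ is a parent of $H_\pi$ (whenever $\pi$ has a strict ancestor in $T'$).

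Now suppose some hedge $H_\rho$ with $\rho\neq\pi'$ is also adjacent to $S$ from above, over a column range $[\sigma_a,\sigma_b]\subseteq[\sigma_L,\sigma_R]$. If $[\sigma_a,\sigma_b]$ met $K$, then at a common column both $H_{\pi'}$ and $H_\rho$ would occupy the region immediately above height $h^*$, contradicting the interior-disjointness of hedges (\autoref{lem:hedges-disjoint}); hence $[\sigma_a,\sigma_b]$ lies entirely in $[\sigma_L,\sigma_R]\setminus K$. Each such column is either a column of $B_\pi$ whose branch piece already stops below $h^*$ — and then there is no tree there above $h^*$, so $H_\rho$ cannot have a tree bar there — or a filler/bridge column of $H_\pi$. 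Tracing the contact of $H_\rho$ with $S$ down, its bottom edge on $S$ is the bottom of a tree bar of $H_\rho$, or of a filler of $H_\rho$, in which case one of the two bars that filler connects has bottom exactly $h^*$. So in every case $B_\rho$ has a tree bar, in some column $\tau$, whose bottom point $x$ lies at height $h^*$. Then $\alpha(x)$ lies at height $f(v)$ and is the point where $\alpha(\tau)$ enters $\rho$; excluding via \autoref{lem:no-leaf-above-horizontal} the degenerate possibility that $x$ is a leaf of $T$ at height $h^*$, the point $\alpha(x)$ is a vertex of $T'$ at height $f(v)$ through which $\rho$ must pass, forcing $\rho=\pi'$, a contradiction. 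Hence $H_{\pi'}$ is the unique parent of $H_\pi$.

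The main obstacle is the geometry of fillers and bridges: a hedge may be flush with the top of $H_\pi$ along columns that its own branch does not meet, so the parent cannot simply be read off from the tree points lying immediately above $S$. Closing this gap requires combining the interior-disjointness of hedges (\autoref{lem:hedges-disjoint}) with both structural observations, plus a careful enumeration of the ways a tree bar, a filler, or a bridge of a hedge can be flush with $S$; the degenerate heights (a second vertex of $T'$ sharing the height $f(v)$, or a leaf of $T$ at height $h^*$) are the subtle subcases, and are precisely where \autoref{lem:no-leaf-above-horizontal} is needed.
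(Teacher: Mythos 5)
Your overall strategy (identify the candidate parent $H_{\pi'}$ explicitly and rule out any other hedge touching the top segment $S$) is close in spirit to the paper's argument, but the final step has a genuine gap. From a tree bar of $H_\rho$ whose bottom point $x$ lies at height $h^*=f(v)-\delta$ you conclude that $\alpha(x)$ is a vertex of $T'$ at height $f(v)$ ``through which $\rho$ must pass, forcing $\rho=\pi'$.'' Nothing forces this: $T'$ may have several vertices (and $\rho$ always has exactly one point) at height $f(v)$, and $\alpha(x)$ being one of them does not make it equal to $v$. You flag this degenerate case yourself but claim \autoref{lem:no-leaf-above-horizontal} resolves it; it cannot, since that observation concerns leaves of $T$ lying above horizontal segments in the drawing of $T$ and says nothing about coincident vertex heights in $T'$. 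The missing ingredient is the key claim the paper relies on: every non-leaf point of $T$ lying on the top of $H_\pi$ (a ``gate'') maps to exactly $v$, not merely to height $f(v)$. This follows from monotonicity: such a point lies horizontally between two points of the closure of $B_\pi$ at height $h^*$ (tops of components of $B_\pi$), both of which map to $v$, so its image is squeezed to $v$ in the order $\leq'_{f(v)}$, exactly as in \autoref{lem:not-horizontally-sandwiched}. With that fact, every tree bar whose bottom lies on $S$ belongs to $B_{\pi'}$, and fillers and bridges of other hedges can be referred back to such tree bars; without it, your contradiction does not go through.

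There are smaller issues as well. The claim that the set $K$ of columns where $B_\pi$ attains height $h^*$ is contiguous is false for compound branches (each component reaches $h^*$, and bridge columns separate them), though you do not really use contiguity. More importantly, your enumeration of how $H_\rho$ can be flush with $S$ omits bridges of $H_\rho$, and in the filler case the tree bar you extract may sit in a column outside $S$, so the point $x$ you obtain need not be a gate of $H_\pi$ at all---which is precisely why the ``height $f(v)$'' information alone is too weak to identify $\rho$.
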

\begin{proofsketch}  
    For any hedge~$H_\pi$, we can show that (a) it needs to have a point of~$T$ on the top, which is adjacent to some tree bar in a parent hedge, and (b) any other bars adjacent to the top of~$H_\pi$ need to be part to the same parent hedge.
\end{proofsketch}

\begin{restatable}{lemma}{anotherpropertyproof}
    \label{lem:no-child-on-longest-bar}
	Hedges in ParkView satisfy property (iii).
\end{restatable}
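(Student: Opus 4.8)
The plan is to reduce property (iii) to a single structural fact about the shift map $\alpha$ together with the \emph{heaviness} of the path decomposition: \emph{the lowest point of any nonempty branch $B_\pi$ is a leaf of $T$}. First I would observe that the longest bar of $H_\pi$ is a tree bar — a filler reaches down only as far as the shorter of the two bars it connects, and a bridge is shorter still — so the longest bar is the tree bar in the column of the path $\sigma$ containing the lowest point $x_{\min}$ of $B_\pi$, and its bottom edge lies at height $f(x_{\min})$. If $x_{\min}$ is a leaf of $T$, then $x_{\min}$ is the bottom of $\sigma$, so the bottom edge of the longest bar lies at the very bottom of the column of $\sigma$, where the drawing of $T$ ends; in particular no tree bar of any other hedge can sit directly below it, and the claim then follows once fillers and bridges are also excluded.

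To establish that $x_{\min}$ is a leaf of $T$, I would argue by contradiction: if $x_{\min}$ is not a leaf it has a down edge in $T$, and I track a point $x$ moving down this edge from $x_{\min}$, so that $\alpha(x)$ moves continuously downward from $\alpha(x_{\min})$, which lies on $\pi$. There are three cases for where $\alpha(x_{\min})$ lies on $\pi$. If it is in the interior of an edge of $\pi$, then $\alpha(x)$ stays on that edge, so $x\in B_\pi$ while $f(x)<f(x_{\min})$, contradicting minimality. If it is the bottom leaf of $\pi$, then $\alpha(x)$ would have to lie strictly below a leaf of $T'$, impossible since leaves are local minima. Finally, if $\alpha(x_{\min})$ is an internal vertex $w'$ of $T'$ — necessarily in the interior of $\pi$, as $\pi$ is open at its top — then $\alpha(x)$ leaves $w'$ along some down edge of $w'$; if it is the through edge we reach a contradiction as in the first case, so suppose it is another down edge $e'$. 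Because $x_{\min}$ is the \emph{global} minimum of $B_\pi$ and $\alpha(x_{\min})=w'$ sits at height $f(x_{\min})+\delta$ on $\pi$, no point of $T$ maps into the part of $\pi$ below $w'$; in particular the through edge of $w'$ has weight $0$, whereas $e'$ has weight at least $1$. This contradicts heaviness, which makes the through edge of $w'$ a down edge of maximum weight. Hence $x_{\min}$ is a leaf of $T$.

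It remains to exclude fillers and bridges of other hedges. A hedge $H'$ adjacent to the bottom edge of $H_\pi$'s longest bar from below must reach up to height $f(x_{\min})$ over the column of $\sigma$, so its top height equals $f(x_{\min})$, and it meets that column through a tree bar, a filler, or a bridge. A tree bar there would put $x_{\min}\in B_{H'}$, forcing $H'=H_\pi$. A filler or bridge passing over that column forces $B_{H'}$ to have points in columns strictly on either side of the column of $\sigma$; since the leaf order respects the tree structure, the lowest common ancestor in $T$ of two such points is an ancestor of $x_{\min}$, hence of height at least $f(x_{\min})$, yet it also lies in $B_{H'}$ (directly for a filler; for a bridge, after first arguing each component of $B_{H'}$ lies entirely on one side of the column) and so has height at most $f(x_{\min})$ — forcing it to equal $x_{\min}$, impossible since it is an internal vertex. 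Here property (i) (\autoref{lem:hedges-disjoint}) rules out $H'$ reaching above height $f(x_{\min})$ over that column, and \autoref{lem:no-leaf-above-horizontal} disposes of any residual bridge floating just below $x_{\min}$.

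I expect the main obstacle to be the internal-vertex case in the proof that $x_{\min}$ is a leaf of $T$: this is precisely where the heaviness of the path decomposition is indispensable, since an arbitrary path decomposition does \emph{not} satisfy property (iii). The careful exclusion of fillers and bridges in the last step is a secondary, more technical difficulty.
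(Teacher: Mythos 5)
Your core step is the same as the paper's: you show that a longest bar is a tree bar whose bottom is the lowest point $x_{\min}$ of $B_\pi$, and that $x_{\min}$ must be a leaf of $T$ because otherwise $\alpha(x_{\min})$ is an internal vertex of $T'$ whose through edge, being heavy, has positive weight and thus pulls a point of $B_\pi$ below $x_{\min}$. This is precisely the paper's key observation (\autoref{lem:longest-tree-bar-has-leaf}), down to the case analysis. Where you diverge is the last step: the paper argues via property (ii) that any hedge $H_\rho$ adjacent below $b$ has a \emph{gate} (a non-leaf point of $T$ on its top) lying in a bar $b'$ of $H_\pi$ that is itself a longest bar, so the gate would have to be a leaf --- contradiction. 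You instead case-split on whether the offending piece of $H_\rho$ under $b$ is a tree bar, a filler, or a bridge.

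The tree-bar and filler cases of your analysis go through, but the bridge case contains a genuine gap. You claim the lowest common ancestor of two points of $B_{H'}$ taken from either side of the column of $\sigma$ ``also lies in $B_{H'}$,'' justified for a bridge ``after first arguing each component of $B_{H'}$ lies entirely on one side of the column.'' That justification does not yield the conclusion: for a bridge the two points lie in \emph{different} components of $B_{H'}$, so the tree path in $T$ connecting them necessarily leaves $B_{H'}$, and their LCA need not belong to $B_{H'}$ at all. Indeed one can check that this LCA maps to (an ancestor of) the top vertex of $\rho$, which is excluded from $\rho$, so it is genuinely outside $B_\rho$; your claimed bound ``height at most $f(x_{\min})$'' therefore fails and no contradiction results. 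The case is closable --- e.g., by monotonicity the tops of the two bridged components and $x_{\min}$ all map to the same point, forcing $\alpha(x_{\min})$ to be the internal vertex at the top of $\rho$, after which your own heaviness argument again produces a point of $B_\pi$ below $x_{\min}$; or more cleanly via the paper's gate argument --- but as written the bridge case is not proved. A secondary loose end: a filler of $H_\pi$ can \emph{tie} with the longest tree bars for longest (its height equals the smaller of the two bars it connects), and property (iii) must hold for every longest bar; the paper explicitly rules out longest fillers using \autoref{lem:no-leaf-above-horizontal}, whereas your opening sentence silently assumes the longest bar is unique and is a tree bar.
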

\begin{proofsketch}
    Let $b$ be a longest bar in a hedge~$H_\pi$. We can show that $b$ is a tree bar, because if it were a filler, this would violate \autoref{lem:no-leaf-above-horizontal}. We prove a key property: a tree bar that is a longest bar of its hedge has a leaf of~$T$ on its bottom. Hence, $b$ has such a leaf. Now assume that there is another hedge~$H_\rho$ adjacent to the bottom of~$b$. Then on the top of~$H_\rho$, there has to be a point via which $H_\rho$ connects to the rest of~$T$. As each hedge can have at most one parent (\autoref{lem:no-multiple-parents}) this connection is via a bar~$b'$ of~$H_\pi$. However, then~$b'$ is a longest tree bar. This contradicts our key property that $b'$, being a longest tree bar, has a leaf on its bottom.
\end{proofsketch}

\noindent
We now show that any set of histograms satisfying (i)--(iii) is 3-colorable, from which it follows that our hedges are 3-colorable.

\begin{figure}[b]
    \centering
    \includegraphics{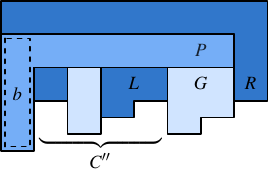}
    \caption{A set of histograms where $P$ is the parent of $G$. The histogram $P$ has a bar~$b$ whose bottom lies below the top of $G$.}
    \label{fig:parent-histogram}
\end{figure}

\begin{theorem}
    \label{thm:3-colorable}
    Any set~$C$ of histograms that satisfies properties (i)--(iii) is 3-colorable.
\end{theorem}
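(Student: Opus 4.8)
The plan is to prove the statement by induction on $|C|$, reducing it to a degeneracy statement about the adjacency graph. Observe first that the hypotheses are hereditary: deleting histograms from $C$ cannot create a new parent for a surviving histogram, cannot enlarge any histogram's longest bar, and cannot create a new adjacency, so every subfamily of $C$ again satisfies (i)--(iii). Hence it suffices to show that every family satisfying (i)--(iii) contains a histogram adjacent to at most two others: removing such a histogram $H$, $3$-coloring the rest by induction, and assigning $H$ one of the (at least one) colors missed by its $\le 2$ neighbors finishes the induction. Equivalently, I want to prove that the adjacency graph of any such family is $2$-degenerate.

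Before locating a low-degree histogram, I would record a structural dichotomy that turns (ii) into something usable. If $H \ne G$ are adjacent, then by (i) their interiors lie on opposite sides of the shared boundary. If the shared boundary contains a horizontal segment $s$, consider the histogram whose interior lies below $s$: a horizontal piece of a histogram's boundary that has interior below it can only be (part of) the flat top, so $s$ is the top of that histogram and the other one is its parent. Consequently every adjacency is either a \emph{parent--child} adjacency, realized along the child's top, or a \emph{side} adjacency realized purely along vertical segments; and the parent--child adjacencies form a forest, since each histogram has at most one parent by (ii) and a parent's top is strictly higher.

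The candidate to delete is the histogram $G$ whose top $h_G$ is lowest, with ties broken by taking the leftmost such $G$. Minimality of $h_G$ forces $G$ to have no children: a child's top would be the bottom of one of $G$'s bars and hence strictly below $h_G$. So the neighbors of $G$ are its (at most one) parent together with its side neighbors, and every side neighbor meets $G$ only at heights $\le h_G$. Using the leftmost tie-break, a histogram attached along $G$'s left side would have had to appear strictly to the left of $G$, and the minimality/leftmost choice then shows it can touch $G$'s left side only at the top-left corner, which does not count as adjacency; this rules out a genuine left side neighbor. What remains is to bound the side neighbors on the right of $G$ and those nestled inside interior notches of $G$. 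Here property (iii) does the work: the bottom of $G$'s longest bar is exposed, which limits how histograms may be stacked flush against $G$, the relevant configuration being exactly the one of \autoref{fig:parent-histogram}, where a long bar $b$ of a parent $P$ dips below the top of a child $G$ nestled beside it.

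I expect this last bound to be the crux. A histogram can, a priori, be flanked by many histograms stacked along a side or packed into its notches, so the heart of the argument is to show that, for the extremal $G$ chosen above, property (iii) collapses all but at most one of these contacts to corner-contacts. If the clean ``$\le 2$ neighbors'' bound resists for this particular $G$, the fallback is a direct inductive coloring: delete $G$, $3$-color the rest, and if $G$'s neighbors already exhaust the three colors, perform a local Kempe-type recoloring propagated down the parent forest below $G$, again using the exposed bottom of the longest bar from (iii) to keep the recoloring confined.
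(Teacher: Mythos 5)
There is a genuine gap, and it sits exactly where you flag the ``crux.'' First, your elimination of a left side neighbor is incorrect: the leftmost tie-break only bites against a competitor whose top is at the same height $h_G$, but a histogram $L$ whose top is \emph{strictly higher} than $h_G$ can be adjacent to $G$ along $G$'s entire left side (e.g.\ a tall bar immediately to the left of a short $G$). Such an $L$ never competes in the tie-break, so the lowest-top histogram $G$ can genuinely have three neighbors $L$, $R$, $P$, and your primary plan --- proving the adjacency graph is $2$-degenerate and coloring greedily --- does not go through for this $G$ (and the paper makes no such degeneracy claim; its $G$ has up to three neighbors). Your correct observations (heredity of (i)--(iii), $G$ has no children, every adjacency is parent--child or side) only get you to ``at most three neighbors,'' which by greedy coloring alone yields $4$ colors, not $3$.

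The actual content of the theorem is how to handle the case where $L$, $R$, $P$ carry three distinct colors, and your one-sentence fallback (``a local Kempe-type recoloring propagated down the parent forest'') is precisely the step that needs a proof. The paper's resolution: by property (iii) the bottom of a longest bar of $P$ is exposed, so $P$ cannot sit flush on top of $G$ everywhere and must have a bar $b$ dipping below $h_G$ on, say, the left of $G$ (as in \autoref{fig:parent-histogram}); letting $C''$ be the descendants of $P$ strictly between $b$ and $G$ (which contains $L$), one applies the induction hypothesis a \emph{second} time to the smaller family $C \setminus C''$ to get a coloring $c_2$ with $P$ and $G$ colored $1$ and $3$, and then glues $c_2$ on $C\setminus C''$ with the original coloring $c_1$ on $C''$. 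The gluing is consistent because $b$ and $G$ wall off $C''$ so that only $P$ and $G$ of $C\setminus C''$ touch $C''$, and their colors are arranged to be compatible with $c_1$ there. This is not a standard Kempe chain swap but a second invocation of induction on a carefully chosen subfamily; without it (or an argument that a degree-$\le 2$ histogram always exists somewhere, which you do not supply), the proof is incomplete.
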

\begin{proof}
    We prove by induction on $n = |C|$.
    For $n = 1$, the theorem trivially holds.
    Now assume that $C$ contains $n+1$ histograms, and let $G$ be a histogram whose top is lowest.
    As the top of any other histogram~$G'$ cannot lie below the top of~$G$, no histogram in~$C$ is adjacent to the bottom side of any bar of~$G$.
    Similarly, there can be only at most one histogram in~$C$ adjacent to the left of~$G$, and there can be only at most one histogram in~$C$ to the right of~$G$.
    Lastly, $G$ can have at most one parent by property~(i), so $G$ has at most three adjacent histograms.

    The set of histograms $C' \coloneqq C \setminus \{G\}$ still satisfies properties (i)--(iii) and has size~$n$.
    Assume (induction hypothesis) that $C'$ is 3-colorable and fix a 3-coloring~$c_1$ for~$C'$. We edit~$c_1$ into a 3-coloring for~$C$.
    If the at most three histograms adjacent to~$G$ use fewer than three colors, we can simply use the third color for~$G$ to obtain a 3-coloring for~$C$.
    Otherwise, denote by $L$ and~$R$ the histograms adjacent to the left and right side of~$G$, and let~$P$ be the parent of~$G$.
    Since $P$, $L$, and~$R$ all have different colors, we can assume without loss of generality that $c_1$ assigns colors 1, 2, and~3 to~$P$, $L$, and~$R$, respectively.
    By property (iii) there is no histogram adjacent to the bottom of a longest bar of~$P$, so we know that $P$ extends below the top of~$G$.
    Thus, $P$ extends below the top of~$G$ either to the left or to the right of~$G$.
    Without loss of generality, assume it extends left of~$G$ and call the rightmost such extending bar~$b$ (\autoref{fig:parent-histogram}).

    Consider the descendants~$C''$ of~$P$ that lie to the left of~$G$ and to the right of~$b$.
    Since $L$ is contained in $C''$, the set $C''$ is nonempty.
    This means that the set $C \setminus C''$ is again a set of histograms that satisfies (i)--(iii) and has size at most $n$, and is hence 3-colorable by the induction hypothesis.
    Let $c_2$ be a 3-coloring of $C \setminus C''$ such that without loss of generality $P$ has color~1 and $G$ has color~3.
    We now define a coloring $c_3$ for~$C$ where the histograms of $C \setminus C''$ take its color from~$c_2$, and the histograms in $C''$ take their color from~$c_1$.

    Note that $G$ and $P$ are the only two histograms of $C \setminus C''$ that are adjacent to histograms in~$C''$.
    So, one of four cases applies to any two adjacent histograms of $C$: (a) both lie in $C \setminus C''$, (b) both lie in $C''$, (c) one is $P$ and the other lies in $C''$ or (d) one is $G$ and the other lies in $C''$ (i.e., the other is $L$).
    For $c_3$ to be a $3$-coloring, it suffices to show that in each case, $c_3$ assigns them distinct colors.
    In case (a), $c_3$ assigns the same distinct colors as $c_1$.
    In case (b), $c_3$ assigns the same distinct colors as $c_2$.
    In case (c), $P$ has color $1$ in both $c_1$ and $c_2$, so $c_3$ again assigns the same distinct colors as $c_2$.
    In case (d), $L$ has color 2 and $G$ has color 3.
\end{proof}

\noindent
Since a hedge is a specific histogram, and our drawing satisfies properties (i)--(iii), we now know that our set of hedges is 3-colorable.

\section{Computing ParkView}\label{sec:algorithms}
ParkView receives as input two ordered merge trees $T$ and $T'$, and a monotone interleaving $(\alpha, \beta)$ between them.
We construct a ParkView visualization using the following four step process:
\begin{enumerate}[noitemsep]
    \item Compute heavy path-branch decompositions for $\alpha$ and $\beta$.
    \item Draw the trees $T$ and $T'$ by drawing their path decomposition.
    \item Draw the branch decompositions as hedges behind the trees, and draw the corresponding active paths on top of the trees.
    \item Color the hedges in both trees, and correspondingly color their active paths in the other tree.
\end{enumerate}
\noindent
The pseudocode we provide in this section is high-level; for details we refer the reader to our implementation\footref{fn:code}.

\cparagraph{Heavy path-branch decompositions.}
We aim to compute a path-branch decomposition with few branch components.
Recall that branches follow uniquely from paths; therefore, our only concern is computing a path decomposition.
As explained in \autoref{sec:path-branch-decomposition}, decomposing a tree into paths is equivalent to choosing at each internal vertex $v$ a down edge that connects to the up edge of $v$ via a path.
In \autoref{sec:heavy-path-branch-decomposition} we define a weight on the down edges of a vertex, and argue that choosing at every vertex the down edge of maximum weight is optimal (\autoref{thm:heavy-path-branch}): the resulting path decomposition minimizes both the total number and the maximum number of branch components.
\autoref{alg:greedy-path-decomposition} computes such a path decomposition in a recursive manner.
In ParkView, when multiple down edges at a vertex have maximum weight, we choose the down edge whose corresponding active path has the lowest start point.

The running time of \autoref{alg:greedy-path-decomposition} is linear in the number of leaves of $T$ and $T'$.
We can for example represent the shift map as an array such that the image under $\alpha$ of a vertex in $T$ can be determined in constant time.
Then by iteratively traversing $T$ from each leaf to its root (and stopping a traversal when encountering an edge that has been traversed before), one can in linear time determine the weights of the edges in $T'$.
After precomputing these edge weights, the recursion then takes linear time.

\begin{algorithm}[tbh]
\Input{Merge trees $T$ and $T'$ and a shift map $\alpha$ from $T$ to $T'$.}
\Output{A heavy path-branch decomposition of $\alpha$.}
\BlankLine

for an edge $e$ of $T'$, let $B_e$ be the part of~$T$ that $\alpha$ maps to the interior of~$e$\;
by simultaneously traversing the trees---following shift map~$\alpha$---pre-compute values $|B_e|$ for each edge~$e$ of $T'$\;

\BlankLine
\SetKwFunction{decompose}{decompose}
\SetKwProg{func}{Function}{}{}
\func{\decompose{$v$}}{
    \If{$v$ is a leaf} {
        \Return $\{ \text{ path from } v \text{ traversing its up edge } \}$\;
    } \Else {
        initialize empty path decomposition $\Pi \gets \emptyset$\;
        \For {child $c$ of $v$} {
            add $\decompose{c}$ to $\Pi$\;
        }
        let $e$ be a down edge at $v$ with $|B_e|$ maximum\;
        let $\pi \in \Pi$ be the path that traverses $e$\;
        extend $\pi$ along the up edge of $v$\;
        \Return $\Pi$\;
    }
}{}
\BlankLine

\Return \decompose{\text{\upshape root of }$T'$}\;

\caption{Heavy Path-Branch Decomposition}
\label{alg:greedy-path-decomposition}
\end{algorithm}

\cparagraph{Drawing and coloring.}
After the path-branch decompositions of the two shift maps have been computed, we first draw the ordered merge trees as described in \autoref{sec:drawing-merge-trees}.
Recall that we draw ordered merge trees rectilinearly by drawing each path in the corresponding path decomposition vertically in its own column.
We make columns with an active path in them wider to emphasize important parts of the tree and improve the scalability of the visualization.
We draw the branches and their corresponding active paths incrementally in a top-down fashion.
\autoref{alg:hedges} describes this drawing algorithm, which simultaneously colors the hedges and corresponding active paths. 
As a last step, we create grid lines and space them by some factor of the interleaving height $\delta$.
We draw the grid as black lines with high transparency on top of the hedges, but behind the tree.

\begin{figure*}[t]
    \centering
    \includegraphics[]{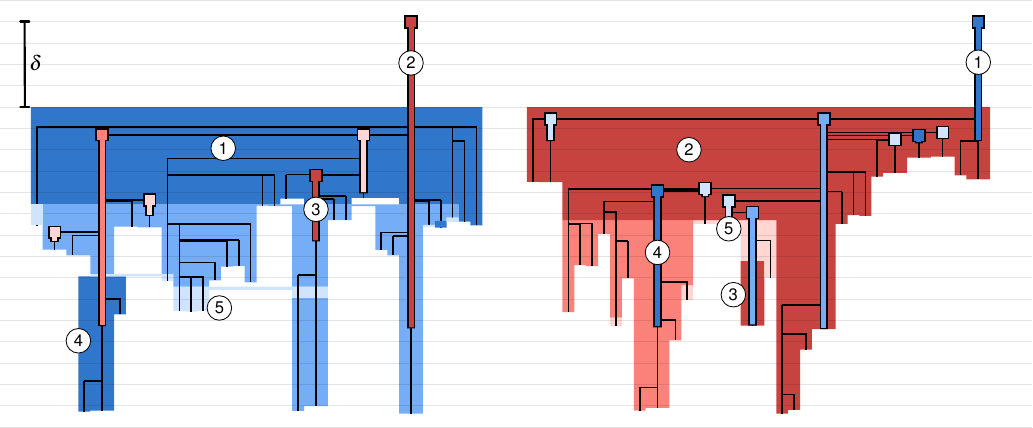}
    \caption{ParkView of a monotone interleaving. The merge trees (32 leaves left, 29 leaves right) are derived from Timesteps 127 (left) and 176 (right) of the Ionization Front dataset \cite{pont2022wasserstein}, using a persistence simplification threshold of 0.05. Grid lines at distance of $\frac{1}{4}\delta$.}
    \label{fig:ionization_main_example}
\end{figure*}

\begin{algorithm}[tbh]

\SetKwFunction{DrawHedge}{DrawHedge}

\Input{A heavy path-branch decomposition $\Pi$ of a monotone shift map $\alpha$ from $T$ to $T'$.}
\BlankLine

sort the paths in $\Pi$ on descending height of their top\\
\BlankLine
\For{path $\pi$ in $\Pi$} {
 draw the active path $\pi^*$ on top of $T'$ \textit{(see \autoref{sec:drawing-merge-trees})}\;
 draw the hedge $G$ representing the branch corresponding to $\pi$ behind the drawing of tree $T$ \textit{(see \autoref{sec:drawing-merge-trees})}\;
 find left $L$ and right $R$ hedges adjacent to $G$\;
 let $P$ be the parent hedge of $G$\;
 \If{$L, R$ and $P$ exist and all have distinct colors}{
    let $b$ be the bar of $P$ closest to $G$ that extends below the top of $G$\;
    let $H_1, \dots, H_k$ be the hedges between $G$ and $b$\;
    let $l$ and~$r$ be the colors of $L$ and~$R$\;
    for the hedges $H_i$, swap the colors $l$ and~$r$\;
 }
 color $G$ and $\pi^*$ with the first color not used by $L, R$ and $P$\;
}
\caption{Draw a Heavy Path-Branch Decomposition}
\label{alg:hedges}
\end{algorithm}

\section{Visual Exploration}\label{sec:experiments}

In this section we illustrate ParkView 
on several real-world merge trees and monotone interleavings.
We first describe our data and preprocessing steps.

\cparagraph{Datasets.}
We use two datasets: 2008\_ionization\_front\_2D (\emph{Ionization Front}) and 2014\_volcanic\_eruptions\_2D (\emph{Volcanic}) \cite{pont2022wasserstein}.
The Ionization Front dataset contains scalar fields that represent the density of shadow instability derived from a simulation of ionization front propagation and comprises 16 time steps. 
%
%
The Volcanic dataset contains scalar fields that represent the sulfur dioxide concentration after a volcanic eruption, obtained by satellite imaging and comprises 12 time steps.

\cparagraph{Pipeline.}
We construct ordered merge trees from the data as follows.
First, we use the Topology Toolkit (TTK)~\cite{ttk} to simplify the scalar fields such that the corresponding merge trees do not have shallow branches; that is, we remove extrema pairs with low persistence~\cite{DBLP:journals/tvcg/LukasczykGMT21}.
We then extract the merge tree from the simplified scalar field.

Since ParkView visualizes monotone interleavings of ordered merge trees, we have to imbue the merge trees with an order. Ideally, the orders for different merge trees in a sequence are stable (they do not change much if the trees/data do not change much) and meaningful for the application at hand. Finding such orders is an interesting open question. For our proof-of-concept, we construct orders via a space-filling curve. Specifically, we first assign a total order to the leaves using the curve and then use a bottom-up approach to make the order consistent with the tree structure by ordering children at internal vertices by their first descendant leaf.

To compute a monotone interleaving we use the relation between the interleaving distance and the Fréchet distance~\cite{beurskens2025relating}.
Specifically, an ordered merge tree induces a 1D curve via an in-order tree traversal, starting and ending at the root of the tree.
We compute a Fréchet matching between the two induced 1D curves, and obtain an interleaving from this matching.
To compute ParkView, we use the steps as outlined in \autoref{sec:algorithms}.
We implemented this pipeline using Python and Kotlin; our code is openly available.\footref{fn:code}

\cparagraph{Showcase.} 
In the following consider \autoref{fig:ionization_main_example}. The height $\delta$ of the visualized interleaving can be easily identified, regardless of the complexity of the interleaving (\ref{req:delta}), as it corresponds to the distance from the highest square glyph to the first hedge below.
The active paths, shown as thick vertical segments, show which parts of the merge trees are mapped to from the other tree (\ref{req:image}).
To determine which hedges map to which active path, we can leverage several properties of ParkView. Consider path~3 in \autoref{fig:ionization_main_example}. 
From its color, we know it maps to hedge 2 or 3 in the right tree. 
As path~3 is left of path~2, it must map to hedge~3 as it the leaf it encloses is left of the lowest leaf hedge~2 encloses.

In addition to leaf order, there are other properties that help identify the mapping, such as the heights of active paths and hedges. The height of the tallest bar in a hedge corresponds to the height of the active path to which it maps. By comparing heights, it becomes clear that hedge 3 maps to path 3.

Additionally, vertical offset provides another means of identification: the top of each hedge is exactly $\delta$ lower than the top of the active path it maps to. For example, when examining path 4 in the right tree, we can see that it is mapped to by hedge 4 in the left tree based on the vertical offset.
By combining these four properties---leaf order, color matching, height correspondence, and consistent vertical offset---we can uniquely determine which active paths in one tree are mapped to by hedges in the other tree (\ref{req:map}). 

\begin{figure*}[t]
    \centering
    \includegraphics[width=\textwidth]{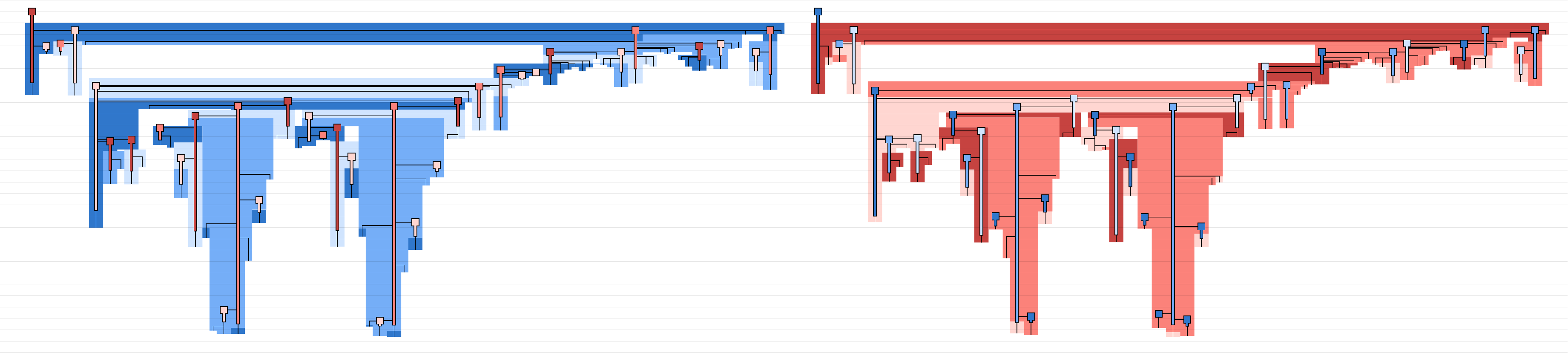}
    \caption{ParkView of a monotone interleaving. The merge trees (73 leaves left, 71 leaves right) are derived from Timesteps 177 (left) and 178 (right) of the Ionization Front dataset \cite{pont2022wasserstein}, using a persistence simplification threshold of 0.01. Grid lines at distance of $\delta$.}
    \label{fig:ionization_large}
\end{figure*}

\begin{figure*}[b]    \centering
    \includegraphics[width=\textwidth]{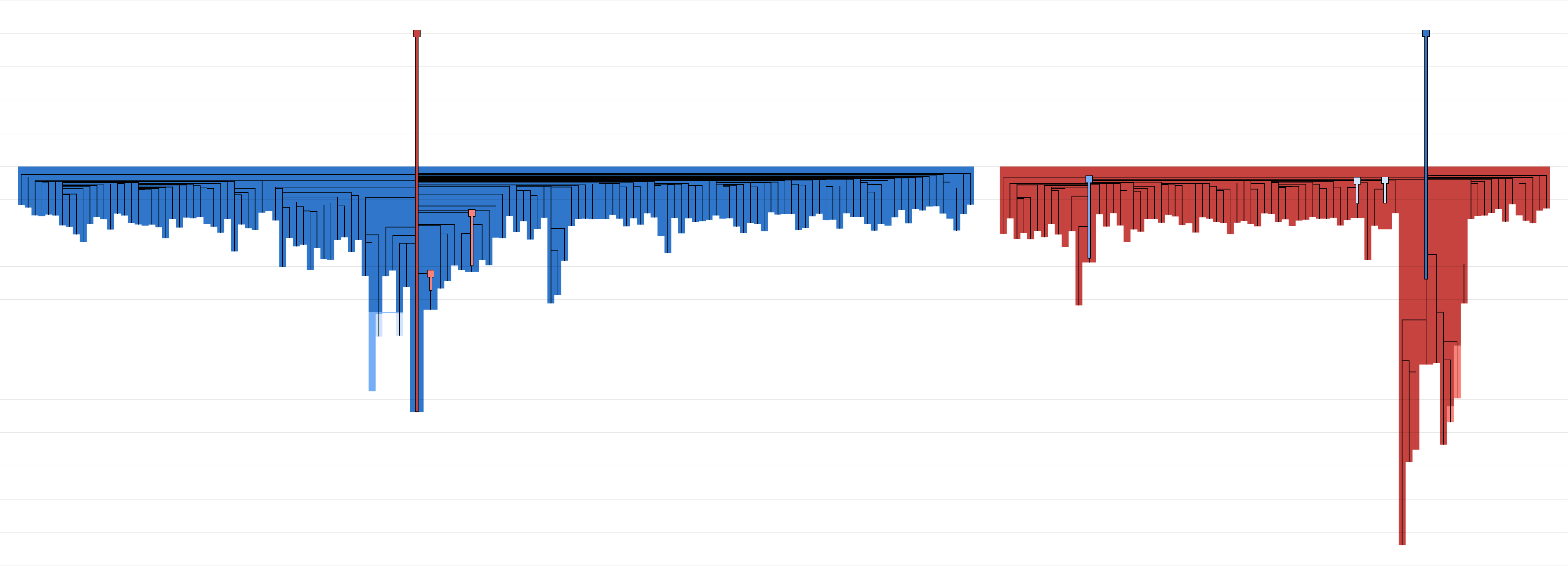}
    \caption{ParkView of a monotone interleaving. The merge trees (136 leaves left, 73 leaves right) are derived from Timesteps 151pm (left) and 156pm (right) of the Volcanic dataset \cite{pont2022wasserstein}, using a persistence simplification threshold of 1.0. Grid lines at distance of $\frac{1}{4}\delta$.}
    \label{fig:volcanic_large}
\end{figure*}

\autoref{fig:ionization_main_example} also shows how columns without an active path take up less horizontal space, which emphasizes parts of the tree that are mapped to (\ref{req:image}). 
This column compression not only highlights relevant aspects of the interleaving, but also decreases the width of both trees. 
This space-saving feature is critical to enable users to view both trees clearly when mapping hedges from one tree to active paths in the other tree.

\cparagraph{Interpreting ParkView.}
A time series of real-world scalar fields typically does not change too drastically over time; the number of extrema and the scale---the minimum and maximum height of points on the scalar field---stay roughly the same.
Therefore, we can expect that the merge trees constructed from such data have roughly the same number of leaves at similar heights.
Under these conditions, we can use ParkView to identify patterns in the given interleaving.
In particular, besides the relative value of~$\delta$, the number of active paths, the size of the hedges, and the drawings of the trees offer an indication of how well two trees compare.
In \autoref{fig:ionization_large} for instance, ParkView contains many active paths, the hedges are relatively small and the trees have a similar structure.
This indicates that the trees, and in addition the corresponding data points, are quite similar.
In \autoref{fig:volcanic_large}, on the other hand, ParkView contains only few active paths, the hedges are large, and the tree drawings are very different.
This indicates that the merge trees are less alike.

\cparagraph{Scalability.} 
We briefly discuss the scalability of ParkView.
When the number of leaves increases and there are relatively few active paths, column compression significantly reduces the horizontal space ParkView uses; see for instance \autoref{fig:volcanic_large}. 
As there are few active paths, matching a hedge to its corresponding active path using the left-to-right order is still feasible. 
If there are many active paths, as in \autoref{fig:ionization_large}, this becomes more difficult. 
However, in such cases, the trees are more alike and often the hedge that corresponds to a path $\pi$ has a similar shape as the hedge that encloses $\pi$, providing an additional cue to match hedges and paths.

As we push scalability limits of ParkView it becomes increasingly clear that some form of tree simplification is necessary to show large trees clearly. The supplementary material includes an example with trees containing over 900 leaves; here inactive paths consume excessive space and obscure the underlying hedges, making the visualization difficult to read. ParkView for trees of this size can still be computed in a matter of seconds.

\cparagraph{More than 3 colors.} 
We prove in \autoref{thm:3-colorable} that 3 colors are sufficient to color the hedges in ParkView. By design, hedges and their corresponding active paths follow the same left-to-right order and hence the matching can be deduced uniquely from the order. The coloring is an additional aid for the user and further serves to visually separate hedges. It might be beneficial for certain applications to use more than the minimum number of 3 colors. Our algorithm supports any number of colors. In the supplementary material we present two variants of \autoref{fig:ionization_large}. One uses the same hue values with six lightness values per tree, and one uses six different hues per tree. We note that with increasing number of colors it necessarily becomes more difficult to distinguish colors.

\section{Conclusion}\label{sec:conclusion}
We introduced ParkView: a compact and scalable visual encoding for monotone interleavings that uses two decompositions of the trees; one into paths, and the other into branches.
ParkView uses a path-branch decomposition that minimizes the number of branch components.
We exploit the properties of this path-branch decomposition, the monotone interleaving, and our approach to drawing branches to show that three colors suffice to color ParkView such that touching hedges have distinct colors.
ParkView shows both shift maps of the interleaving simultaneously by superimposing drawings of branches and the active paths they map to.

As the size of the merge trees increases, non-active paths take up too much visual space. We plan to investigate options to further compress the tree, while still satisfying some form of requirement \ref{req:map}. One possible avenue to explore is interactivity, with parts of the trees folding in and out on demand.

ParkView is designed to compare exactly two merge trees. To use it for the analysis of whole time series or ensembles, further extensions will be necessary. We can imagine solutions that use one reference tree or overlay multiple hedges, however, it is not obvious how to do so without increasing the visual complexity too much.

When using ParkView as part of a visual analytics system on real-world data, we need to construct an order for each merge tree. As discussed in \autoref{sec:experiments}, creating stable orders, that are meaningful for the application at hand, is non-trivial. For spatial-temporal data we can imagine that orderings based on spatial features might be suitable; however, this question merits further investigation. 

ParkView visualizes monotone interleavings of ordered merge trees. 
It is unclear if a similar visualization for more general interleavings exists. We can compute an optimal path-branch decomposition for arbitrary interleavings, but it is unclear how to base a visualization on this decomposition. 
Any drawing of a tree induces an order of its leaves, which implies structure that might not be present. 
Furthermore, without the ability to match branches and paths from left to right, the interleaving will need to be indicated more explicitly, creating visual clutter. Hence visualizing general merge trees and interleavings remains a challenging open problem. 

The (monotone) interleaving distance is a bottleneck measure which forces matchings to always increase by $\delta$. If the distance between two trees is high, then the interleaving distance does not capture the similarity between subtrees which are closer than $\delta$. ParkView still allows some visual matching in such cases, but ultimately, an adaptive, more local version of the interleaving distance would give better insights in the similarity of merge trees. Developing such an improved measure is another challenging open problem.

\acknowledgments{
Research on the topic of this paper was initiated at the 7th Workshop on Applied Geometric Algorithms (AGA 2023) in Otterlo, The Netherlands.
Thijs Beurskens, Willem Sonke, Arjen Simons, and Tim Ophelders are supported by the Dutch Research Council (NWO) under project numbers OCENW.M20.089 (TB, WS), VI.Vidi.223.137 (AS), and VI.Veni.212.260 (TO).
}

\bibliographystyle{abbrv-doi-hyperref}

\putbib
\end{bibunit}
\makeatletter
\begin{bibunit}

\begin{figure*}[t!]
\begin{minipage}{\textwidth}
\makeatletter
\begin{center}%
           {\sffamily\ifvgtcjournal\huge\else\LARGE\bfseries\fi%
	      \vgtc@sectionfont
	     ParkView: Visualizing Monotone Interleavings \\ {\mdseries \Large Supplementary Material} \par}%
       \end{center} 
\makeatother
\end{minipage}
\end{figure*}
\setcounter{figure}{0}
\setcounter{table}{0}

\newpage

\appendix 

\section{Omitted Proofs from \autoref{sec:guarantees}}
\label{app:proof}

Recall that an ordered merge tree~$T$ is equipped with an order on its leaves, denoted~$\sqsubseteq$, that respects the tree structure. To prove the observations and lemmas from~\autoref{sec:guarantees}, it is helpful to provide a more formal definition. In fact, there are two equivalent definitions~\cite{beurskens2025relating} which we both use:
\begin{itemize}
    \item Firstly, an ordered merge tree is a merge tree equipped with a total order~$\sqsubseteq$ on its leaves such that for all leaves $x_1, x_2, x_3$ with $x_1 \sqsubseteq x_2 \sqsubseteq x_3$, it holds that $x_2$ is in the subtree of~$T$ rooted at the lowest common ancestor of $x_1$ and~$x_3$.
    \item Alternatively, an ordered merge tree is a merge tree equipped with a set of total orders~$\leq_h$ (one for each height~$h$) such that for each points $x_1$ and~$x_2$ at height~$h$ with $x_1 \leq_h x_2$, for all $h' > h$ we have $x_1|^{h'} \leq_{h'} x_2|^{h'}$ where $x_1|^{h'}$ and~$x_2|^{h'}$ are the ancestors of $x_1$ and $x_2$ at height~$h'$.
\end{itemize}
Theorem~3.1 of~\cite{beurskens2025relating} proves that these definitions are equivalent.

Additionally, we need a more formal definition of the monotonicity of a shift map. Let $T$ and~$T'$ be ordered merge trees with sets of total orders $\leq_h$ and~$\leq_h'$, respectively. A shift map~$\alpha \colon T \to T'$ is monotone if, for all heights~$h$ and all points $x_1$ and~$x_2$ at height~$h$, we have if $x_1 \leq_h x_2$, then $\alpha(x_1) \leq'_{h + \delta} \alpha(x_2)$.

\nothorizontallysandwiched*
\begin{proof}
    For contradiction, assume that there are three points $x_1$, $x$, and $x_2$ (in this left-to-right order) at height~$h$, where $x_1$ and~$x_2$ lie in branch~$B_\pi$ while $x$ lies in some other branch~$B_\rho$. Consider the total order~$\leq_h$ (in the second definition). As our drawing respects this order, we know $x_1 \leq_h x \leq_h x_2$. Moreover, as $x_1$ and~$x_2$ are in $B_\pi$, their images $\alpha(x_1) = \alpha(x_2)$ lie on~$\pi$; similarly, $\alpha(x)$ lies on~$\rho$. By monotonicity of~$\alpha$, however, $\alpha(x_1) \leq_{h +\delta}' \alpha(x) \leq_{h +\delta}' \alpha(x_2)$. This implies that $\alpha(x_1) = \alpha(x) = \alpha(x_2)$ which is a contradiction with the fact that they lie on distinct paths $\pi$ and~$\rho$.
\end{proof}

\noleafabovehorizontal*
\begin{proof}
    For contradiction, assume that some leaf~$x$ is positioned vertically above a horizontal segment~$s$, where $s$ represents the internal vertex~$v$ of~$T$. Let $l$ and~$r$ be the leftmost and rightmost leaves of the subtree rooted at~$v$; clearly, $l$ is to the left of~$x$ and~$r$ is to the right of~$x$. Because the column order of our drawing corresponds to the order~$\sqsubseteq$ of~$T$ (in the first definition), we have $l \sqsubseteq x \sqsubseteq r$. Then by the definition, $x$ has to be in the subtree rooted at the lowest common ancestor of $l$ and~$r$, which is~$v$. This is however not the case, as $x$ is higher than~$v$. Contradiction.
\end{proof}

\noindent
For convenience, we restate the three properties from~\autoref{sec:guarantees:coloring}. Hedges:
\begin{itemize}[noitemsep]
    \item[(i)] are pairwise interior disjoint;
    \item[(ii)] have at most one parent;
    \item[(iii)] have no hedge adjacent to the bottom of their longest bar.
\end{itemize}
For the purposes of the following proofs, we define a \emph{horizontal (vertical) segment} of a branch $B$ (or hedge~$H$) as a horizontal (vertical) line segment of the drawing of~$T$ that is completely contained in~$B$ (or~$H$).

\propertyproof*
\begin{proof}
    Let $h$ be some fixed height. Consider all points of~$T$ at height~$h$; we call these points \emph{cutpoints}. We consider the set of branches~$\mathcal{B}$ that contain one or more cutpoints. For each branch $B \in \mathcal{B}$, let the \emph{foliage} of~$B$ be the set of leaves in subtrees rooted at cutpoints in~$B$. For each branch~$B \in \mathcal{B}$, define its \emph{range} as the set of consecutive columns in the drawing between the leftmost and rightmost leaves in the foliage of~$B$. Every column~$c$ is contained in the range of at most one branch. To see this, consider the order~$\leq_h$, which determines the left-to-right order on the cutpoints. Correspondingly, we get a left-to-right order~$\leq_h$ on the branches in~$\mathcal{B}$ by \autoref{lem:not-horizontally-sandwiched}: two cutpoints of any given branch cannot have a cutpoint of another branch between them. This branch order also determines the order of the leaves: for two branches $B_1 \leq_h B_2$, any leaf in the foliage of~$B_1$ comes before any leaf in the foliage of~$B_2$ in the $\sqsubseteq$ order.
    
    Now, let~$p$ be a point in some hedge~$H_\pi$. Assume that $p$ is in column~$c$ at height~$h$. We show that $B_\pi$ has $c$ in its range. If $p$ is in a tree bar of~$H_\pi$, then $c$ contains a leaf in the foliage of~$B_\pi$; it follows that $c$ is in the range of~$B_\pi$. Otherwise, $p$ is in a filler or a bridge of~$H_\pi$. In both cases, there are tree bars to the left and to the right of~$c$, which both contain a leaf in the foliage of~$B_\pi$; it again follows that $c$ is in the range of~$B_\pi$.
    
    Assume for contradiction that some point~$p$ in the drawing (say, in column~$c$) is in two distinct hedges $H_\pi$ and~$H_\rho$. Then $B_\pi$ and~$B_\rho$ both have $c$ in their ranges, contradicting the fact that ranges are disjoint (for all heights~$h$).
\end{proof}

\begin{figure*}[b]
    \centering
    \includegraphics[width=\textwidth]{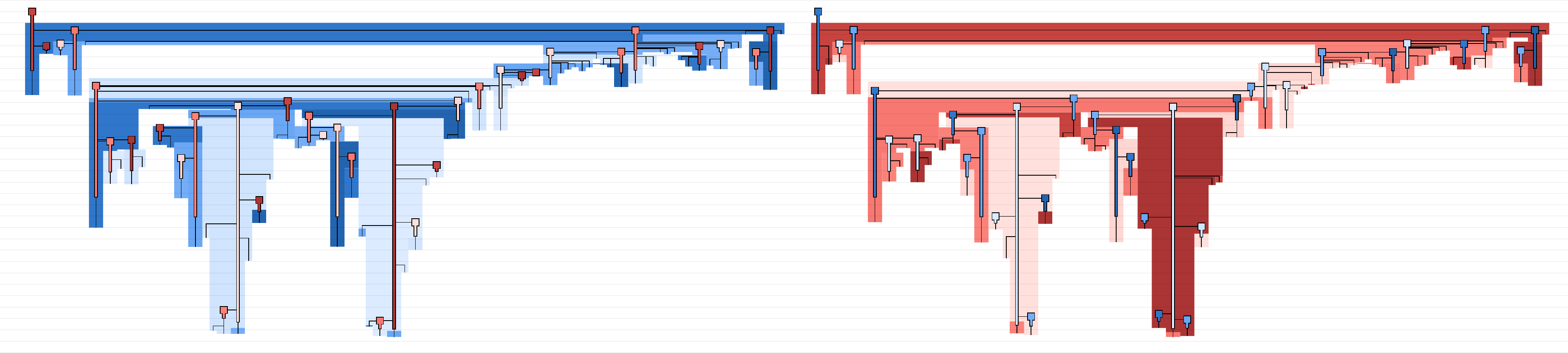}
    \caption{ParkView of a monotone interleaving using six colors of the same hue, with different lightness values, per tree. The merge trees (73 leaves left, 71 leaves right) are derived from Timesteps 177 (left) and 178 (right) of the Ionization Front dataset \cite{pont2022wasserstein}, using a persistence simplification threshold of 0.01. Grid lines at distance of $\delta$.}
    \label{fig:ionization_large1}
\end{figure*}
\begin{figure*}[b]
    \centering
    \includegraphics[width=\textwidth]{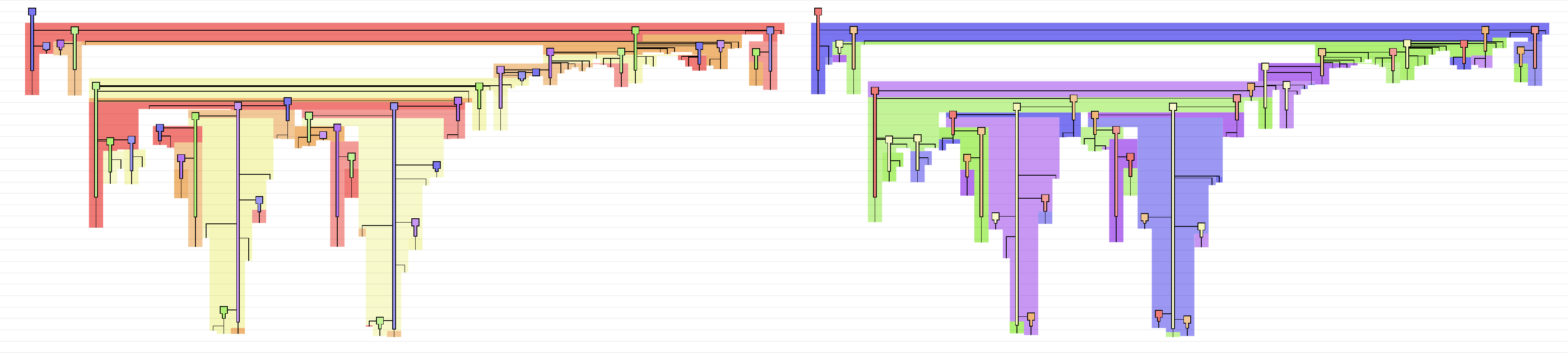}
    \caption{ParkView of a monotone interleaving using six colors, that have distinct hues, per tree. The merge trees (73 leaves left, 71 leaves right) are derived from Timesteps 177 (left) and 178 (right) of the Ionization Front dataset \cite{pont2022wasserstein}, using a persistence simplification threshold of 0.01. Grid lines at distance of $\delta$.}
    \label{fig:ionization_large2}
\end{figure*}

\noindent
Let $\pi$ be a path in a path decomposition $\Pi$ of $T'$.
We say a path $\rho$ in~$\Pi$ is the \emph{parent} of $\pi$ if the top of~$\pi$ lies on~$\rho$.
For a hedge~$H_\pi$, we say that the \emph{top} of~$H_\pi$ is the line segment that forms the top side of the closure of~$H_\pi$. A \emph{gate} of~$H_\pi$ is a non-leaf point~$x$ of $T$ that lies on the top of~$H_\pi$.

\yetanotherproperty*
\begin{proof}  
    Let~$H_\pi$ be a hedge. Unless $H_\pi$ corresponds to the root, it has at least one gate; consider such a gate~$x$. The point~$x$ is contained in another hedge~$H_\rho$, which is a parent of~$H_\pi$. We show that $H_\pi$ cannot have another parent. For this, consider a bar~$b$ adjacent to the top of~$H_\pi$; we show that $b$ is in~$H_\rho$.
    If $b$ is a tree bar, then it contains a gate~$x'$. As any point of~$T$ in the top of a hedge maps to the top of its corresponding path, we know that $\alpha(x') = \alpha(x)$, so $b$ is in~$H_\rho$.
    If $b$ is a filler or bridge, it is surrounded by two tree bars in the same hedge with at least the same height as~$b$. At least one of these two tree bars is adjacent to the top of~$H_\pi$, for if not, the entire top of~$H_\pi$ would be covered by fillers or bridges, and hence there would be no space for the gate. As any such tree bar is part of~$H_\rho$, $b$ is too.
\end{proof}

\begin{observation}
    \label{lem:longest-tree-bar-has-leaf}
    If a tree bar~$b$ is a longest bar in its hedge~$H_\pi$, then $b$ contains a leaf~$\ell$ of~$T$.
\end{observation}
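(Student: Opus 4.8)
The plan is to reduce the statement to a claim about the lowest point of the branch $B_\pi$, and then to derive a contradiction from the \emph{heaviness} of the underlying path decomposition of $T'$ if that point were not a leaf.

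First I would locate the point in question. Let $\sigma$ be the path of the path decomposition of $T$ whose column contains $b$, let $h_0$ be the bottom height of $B_\pi$, and let $x$ be the lowest point of $\sigma$ that lies in $B_\pi$, so that the bottom of $b$ lies at height $f(x)$. I claim $f(x)=h_0$. Since all bars of $H_\pi$ share the same top height, a longest bar is one whose bottom is lowest; bridges are strictly shorter than the shortest bar and a filler is no longer than its shorter neighbour, so the smallest bar‑bottom of $H_\pi$ is attained by a tree bar; and a point of $B_\pi$ realising $h_0$ lies on exactly one path (as that path's bottom leaf or as an interior point, never as its excluded top), whose tree bar then has bottom exactly $h_0$. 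Hence the bottom of $b$ is at $h_0$, and $x\in B_\pi$ realises the minimum of $f$ over $B_\pi$. It therefore suffices to prove that $x$ is a leaf of $T$; suppose it is not.

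Since $x$ is not a leaf, some edge of $T$ leaves $x$ downwards; pick a point $y$ on that edge just below $x$, so $f(y)<h_0$. As $\alpha$ is continuous and a $\delta$-shift map, the image of the downward arc from $x$ to $y$ is a continuous path in $T'$ that starts at $u\coloneqq\alpha(x)$ and has strictly decreasing height; hence, for $y$ close enough to $x$, the point $\alpha(y)$ lies close to and strictly below $u$. Now $u\in\pi$ and $u$ is not the top $v$ of $\pi$, since $v\notin\pi$. If $u$ were the bottom leaf $w$ of $\pi$, no path of strictly decreasing height could leave $u$ at all, a contradiction. If $u$ were interior to an edge $e'$ of $T'$, then $e'\subseteq\pi$ (the endpoints of a path are vertices, not edge interiors), so $\alpha(y)$, lying on $e'$ just below $u$, would be in $\pi$ and hence $y\in B_\pi$ with $f(y)<h_0$, again a contradiction. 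So $u$ is a vertex strictly interior to $\pi$: by the definition of the path decomposition, $\pi$ enters $u$ from below through $u$'s through-edge $e_u$, while $u$'s other down-edges do not lie on $\pi$.

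The crux of the argument---and the step I expect to be the main obstacle---is to bring in heaviness here. First, $B_{e_u}=\emptyset$: any point $p$ mapping into the interior of $e_u$ would lie in $B_\pi$ and satisfy $f(p)=f(\alpha(p))-\delta<f(u)-\delta=h_0$, contradicting minimality of $h_0$. So $e_u$ has weight $0$, and since the decomposition is heavy, $e_u$ is a maximum-weight down-edge of $u$, so every down-edge of $u$ has weight $0$. Finally, for $y$ close to $x$ the point $\alpha(y)$ lies in the interior of some down-edge $e'$ of $u$. If $e'=e_u$, then $\alpha(y)$ lies in the interior of $e_u$, contradicting $B_{e_u}=\emptyset$ (equivalently, $y\in B_\pi$ with $f(y)<h_0$). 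If $e'\neq e_u$, then $y\in B_{e'}$, so $e'$ has positive weight, contradicting that the maximum-weight down-edge $e_u$ of $u$ has weight $0$. Either way we reach a contradiction, so $x$ is a leaf, and since $b$ extends down to $x$, it contains the leaf $\ell\coloneqq x$ at its bottom. I would also note that heaviness is genuinely needed: for a non-heavy path decomposition one can arrange a longest tree bar that bottoms out at an internal vertex, so this observation---and hence the $3$-colorability argument that relies on it---really depends on using a heavy decomposition.
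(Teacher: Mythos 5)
Your proof is correct and follows essentially the same route as the paper's: identify the bottom of the longest tree bar as a lowest point $x$ of the branch $B_\pi$, assume $x$ is not a leaf, and use heaviness of the through edge at $\alpha(x)$ to produce a point of $B_\pi$ strictly below $x$, a contradiction. Your explicit case analysis on whether $\alpha(x)$ is a leaf, an edge-interior point, or an internal vertex of $\pi$ is a slightly more detailed rendering of the same argument the paper gives.
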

\begin{proof}
    Let $x$ be a lowest point of~$T$ in $b$; as $b$ is a longest bar, $x$ is also a lowest point in~$H_\pi$. We show that~$x$ is the desired leaf~$\ell$ of~$T$.
    Assume for contradiction that $x$ is not a leaf of~$T$, and let $y = \alpha(x)$.
    We now argue that $y$ is an internal vertex of~$T'$, that is, $y$ has more than one down edge.
    By continuity of~$\alpha$, any strict descendant~$x'$ of~$x$ maps to a strict descendant of~$y$. In particular, as~$x$ is not a leaf, there is such a descendant~$x'$ for which $y'$ lies on a down edge~$e$ of~$y$. This means that $y$ is not a leaf, and hence exactly one down edge of~$y$ is contained in~$\pi$. If~$e$ is the only down edge it is thus contained in~$\pi$, meaning that $\alpha(x') \in \pi$ and thus $x' \in B_\pi \subseteq H_\pi$. However, this is impossible as we chose $x$ to be the lowest point in~$H_\pi$. Therefore $y$ is an internal vertex.
    
    As our path decomposition is heavy, $\pi$ contains a heavy down edge~$e$ of~$y$.
    Since $\alpha(x') = y'$ there is at least one down edge of $y$ with weight at least~$1$, and as~$e$ is heavy it thus also has weight at least~$1$.
    So there is a point $x''$ of~$T$ that maps to~$e$.
    As $e$ is contained in~$\pi$, $x''$ must be a point in~$H_\pi$ with $f(x'') < f(x)$, contradicting our choice of~$x$ as the lowest point of~$H_\pi$.
    So $x$ is the desired leaf~$\ell$ of~$T$.
\end{proof}

\anotherpropertyproof*
\begin{proof}
    Let $b$ be a longest bar in a hedge~$H_\pi$. We show that $b$ is a tree bar. A bridge can never be a longest bar in its hedge, so we need to show that $b$ cannot be a filler.
    Assume $b$ is a filler, then by construction there are two tree bars in the same hedge, one left of~$b$ and one right of~$b$, that have the same height as~$b$. By \autoref{lem:longest-tree-bar-has-leaf}, these tree bars contain leaves $\ell_1$ and~$\ell_2$ of~$T$. This implies that there cannot be a leaf in the column of~$b$, because:
    \begin{itemize}
        \item if such a leaf would be below~$b$, then it needs to be connected to the rest of the tree with a horizontal segment that passes below either $\ell_1$ or~$\ell_2$, which violates \autoref{lem:no-leaf-above-horizontal};
        \item if such a leaf would be above~$b$, then the horizontal segment in~$b$ itself would be below that leaf, which again violates \autoref{lem:no-leaf-above-horizontal}.
    \end{itemize}

    So, $b$ is a tree bar. By \autoref{lem:longest-tree-bar-has-leaf}, $b$ contains a leaf~$\ell$. We now show there cannot be a hedge~$H_\rho$ adjacent to the bottom of~$b$. Assume for contradiction that such a hedge~$H_\rho$ does exist. Then $H_\rho$ has at least one gate~$x$, which has to be part of some hedge~$H$ which is a parent of~$H_\rho$. By \autoref{lem:no-multiple-parents}, $H = H_\pi$.

    Let~$b'$ be the bar of~$H_\pi$ containing~$x$. Then $b'$ is a tree bar and, considering $b$ is a longest bar of~$H_\pi$ and adjacent to the top of~$H_\rho$, $b'$ is also a longest bar of~$H_\pi$. Hence, by \autoref{lem:longest-tree-bar-has-leaf}, $x$ is a leaf of~$T$ which contradicts the fact that $x$ is a gate.
\end{proof}

\section{Color Examples}\label{app:colors}
See \autoref{fig:ionization_large1} and \autoref{fig:ionization_large2}.

\section{Large Merge Trees}\label{app:large_example}
See \autoref{fig:volcanic_huge}.
\begin{sidewaysfigure}
    \vspace{8cm}
    \centering
    \includegraphics[width=\textwidth]{}
    \caption{A ParkView visualization of a monotone interleaving between two ordered merge trees derived from time steps 150am (left) and 150pm (right) of the Ionization Front dataset \cite{pont2022wasserstein} using a persistence simplification threshold of 0.01. The left and right tree have 961 and 912 leaves respectively. The grid lines are separated by a distance of $\frac{1}{4}\delta$.}
    \label{fig:volcanic_huge}
\end{sidewaysfigure}

\putbib
\end{bibunit}
\makeatother
\end{document}